\newtheorem{definition}{Definition}[section]
\newtheorem{proposition}[definition]{Proposition}
\newtheorem{lemma}[definition]{Lemma}
\newtheorem{theorem}[definition]{Theorem}
\newtheorem{corollary}[definition]{Corollary}
\newtheorem{remark}[definition]{Remark}
\newtheorem{example}[definition]{Example}
\begin{document}


\title{The Clifford theory of the n-qubit Clifford group} 



\author{Kieran Mastel}
\affiliation{ 
Institute for Quantum Computing, University of Waterloo, Waterloo, Ontario N2L 3G1, Canada}%
\affiliation{Department of Pure Mathematics, University of Waterloo, Waterloo, Ontario N2L 3G1, Canada}
\altaffiliation{Now at the Department of Mathematics and Statistics, University of Ottawa, Ottawa, Canada}
\email{kmastel@uottawa.ca}


\date{\today}

\begin{abstract}
The n-qubit Pauli group and its normalizer the n-qubit Clifford group have applications in quantum error correction and device characterization. Recent applications have made use of the representation theory of the Clifford group. We apply the tools of (the coincidentally named) Clifford theory to examine the representation theory of the Clifford group using the much simpler representation theory of the Pauli group. We find an unexpected correspondence between irreducible characters of the n-qubit Clifford group and those of the (n+1)-qubit Clifford group.
\end{abstract}

\pacs{}

\maketitle 

\section{\label{sec:intro}Introduction}

The Pauli group and its normalizer, the Clifford group, are fundamental structures in quantum information theory. These groups have applications in quantum error correction \cite{Gottesman2010} and randomized benchmarking \cite{Helsen_2019}. By the Gottesman-Knill theorem, quantum computation using Clifford unitaries is efficiently simulable on a classical computer \cite{Gottesman1998,PhysRevA.70.052328}. The Clifford group is a unitary 2-design \cite{Dankert_2009}, in other words, `averages over the Clifford group approximate averages over the unitary group well.' Generating random Clifford unitaries is less computationally expensive than sampling Haar random unitaries \cite{Koenig_2014}. Thus, random Clifford elements have utility in performing randomized protocols. Recent applications of the Clifford group to randomized benchmarking and classical shadow estimation have utilized its representation theory \cite{Helsen_2019,helsen2022thrifty,Helsen2018}. Determining the character table of the Clifford group, which classifies its irreducible representations, is a natural open problem prompted by these papers. Surprisingly, despite the usefulness of the representation theory of the Clifford group, its character table has not been determined.

The representation theory of the Pauli group is simple and explained in \cref{sec:paulirep}. Thus, it would be advantageous to use our understanding of the representation theory of the Pauli group to examine that of the Clifford group. To do this, we can apply the tools of Clifford theory (which is named after Alfred H. Clifford, while William K. Clifford gives his name to the group). Clifford theory is the part of representation theory focused on relating representations of a normal subgroup $N$ of $G$ to representations of $G$. The inertia subgroup $I_{G}(\sigma)$ is the subgroup of $G$ that maps a representation $\sigma$ of $N$ to an isomorphic representation under conjugation. The central result of Clifford theory is the \textit{Clifford correspondence} between irreducible representations of the inertia subgroup and certain irreducible representations of $G$. When the inertia subgroup is understood, this simplifies the calculation of irreducible characters of $G$. Since any two nontrivial irreducible Pauli representations are conjugate in the Clifford group and conjugate representations have isomorphic inertia subgroups, we need only examine one inertia subgroup. In our first result, we determine the inertia subgroup of a nontrivial irreducible representation of the 
$n$-qubit Pauli group in the $n$-qubit Clifford group up to complex phases for $n\geq 2$.

Clifford theory does not fully calculate the character table of the Clifford group. The Clifford correspondence does not give us any information when the inertia subgroup $I_G(\sigma)$ is all of $G$. In particular, the Clifford correspondence does not help when $\sigma$ is the trivial representation of $N$. For a group $G$, inflation produces a bijection between irreducible representations of the quotient group $G/N$ and irreducible representations of $G$ whose restriction to $N$ is trivial. We can thus understand the case where the Clifford correspondence offers no information by examining the representation theory of the quotient group. For the $n$-qubit Clifford group and the $n$-qubit Pauli group, the quotient group is the symplectic group $Sp(2n,2)$. The symplectic group is a finite group of Lie type, and thus its representation theory is calculated by the Deligne-Lusztig theory \cite{geck_malle_2020}, which we do not examine in this paper. Together with the representations calculated using Clifford theory, this accounts for all the irreducible representations of the Clifford group.

In \cref{sec:quotrep}, we show that the inertia quotient group $I_G(\sigma)/N$ of a nontrivial Pauli representation in the $n$-qubit Clifford group is a central extension of the affine symplectic group $Sp(2(n-1),2)\ltimes\mathbb{Z}^{2(n-1)}_2$ by $\mathbb{Z}_2$. The Clifford group, in the literature on finite group extensions, is known as the unique non-split extension of $Sp(2n,2)$ by $\mathbb{Z}^{2n}_2$. By examining the Clifford group from this perspective, Bernd Fischer showed in \cite{Fischer88} that the Clifford and affine symplectic groups have identical character tables. Combining these facts allows us to produce a surprising correspondence between irreducible characters of the $n$-qubit Clifford group $C_{n}$  and the $(n+1)$-qubit Clifford group $\mathcal{C}_{n+1}$. Any irreducible character of $\mathcal{C}_n$ can be viewed as an irreducible character of $Sp(2n,2)\ltimes\mathbb{Z}_2^{2n}$ which inflates to an irreducible character of the inertia subgroup $IN_{n+1}$ of a nontrivial representation of the $(n+1)$-qubit Pauli group in the $(n+1)$-qubit Clifford group. This representation induces an irreducible character of $\mathcal{C}_{n+1}$ by the Clifford correspondence. The natural map of characters from $\mathcal{C}_n$ to $\mathcal{C}_{n+1}$ is induction. However, unlike the induction of characters, our correspondence maps irreducible characters to irreducible characters. Knowing the irreducible characters of $\mathcal{C}_n$ allows us to calculate an equal number of the irreducible characters of $\mathcal{C}_{n+1}$.
\section{Preliminaries}
\subsection{Representation theory}
In this section, we recall from \cite{serre_1977} some basic facts about representation and character theory of finite groups. A  \textbf{linear representation} of a finite group $G$ is a homomorphism $\rho$ from the group $G$ into the group $GL(V)$, where $V$ is a vector space over $\mathbb{C}$.
If $W$ is a vector subspace of $V$ such that $\rho(g)x \in W$ for all $g \in G$ and $x \in W$, then the restriction $\rho^W(g)$ of $\rho(g)$ to $W$ is a linear representation of $G$ on $W$. $W$ is called a  \textbf{subrepresentation} of $V$.
An  \textbf{irreducible} representation is one where $V$ is not 0 and has no nontrivial subrepresentations. It is a standard result that every representation is a direct sum of irreducible representations.
 
If $\rho$ and $\sigma$ are representations of a finite group $G$ on the vector spaces $V$ and $W$ respectively then a linear map $\phi:V \rightarrow W$ is called an  \textbf{intertwining map} of representations if $\phi(\rho(g)v) = \sigma(g)\phi(v)$
for all $g \in G$ and all $v \in V$. The vector space of all such $G$-linear maps between $V$ and $W$ is denoted by $\text{Hom}_G(\rho,\sigma)$ or $\text{Hom}_G(V,W)$. If $\phi$ is also invertible, it is said to be an  \textbf{isomorphism} of representations. When we classify irreducible representations, we do so up to isomorphism. Isomorphic representations are sometimes called  \textbf{equivalent} representations.
 
Let $F$ be a field, a  \textbf{projective representation} of a finite group $G$ is a is a map $\Phi:G \rightarrow GL_n(F)$ such that for every $g,h \in G$, there exists a scalar $\alpha(g,h)\in F$ such that 
\begin{equation*}
    \Phi(g)\Phi(h) = \Phi(gh)\alpha(g,h).
\end{equation*}
The set of values $\alpha(g,h)$ is called the  \textbf{factor set}, and is uniquely determined by $\Phi$. The notions of equivalence and irreducibility translate verbatim for projective representations. We refer the reader to section 7.2 of \cite{ceccherini2022representation} for a more exhaustive discussion of projective representations.
 
Let $\rho: G \rightarrow GL(V)$ be a linear representation of a finite group $G$. For each $g \in G$, define
\begin{equation*}
    \chi_{\rho}(g) = \text{Tr}(\rho(g)),
\end{equation*}
with $\text{Tr}(\rho(g))$ being the trace of the operator $\rho(g)\in GL(V)$. The function $\chi_{\rho}$ on $G$ is called the  \textbf{character} of the representation $\rho$. If $\rho$ is irreducible, we call $\chi_{\rho}$ an irreducible character. It is a standard result that two representations are isomorphic if and only if they have the same character. Note that, from properties of the trace, $\chi_{\rho}(h^{-1}gh) = \chi_{\rho}(g)$ and thus characters are constant on the conjugacy classes of groups. In other terms, characters are  \textbf{class functions}. 
 
Since characters form an orthonormal basis of the space of class functions, the number of inequivalent irreducible representations equals the number of conjugacy classes of $G$. If $\chi$ is the character of a representation $(\rho,V)$ of $G$, and $e\in G$ is the identity, then $\chi(e) = \text{dim} V$ and is called the  \textbf{degree} of the character. If $G$ is abelian, then every character is of degree $1$.
 
The  \textbf{character table} of a finite group $G$ is the table with rows corresponding to inequivalent irreducible characters of $G$ and columns corresponding to conjugacy classes of $G$. Entry $(i,j)$ of the table is the value of the $i^{\text{th}}$ irreducible character of $G$ on the $j^{\text{th}}$ conjugacy class of $G$. 
 
Clifford theory deals with induced and restricted representations, which we will now define.
\begin{definition}
If $\rho$ is a representation of $G$ and $H$ is a subgroup of $G$ then we can define the  \textbf{restriction} of $\rho$ to $H$ 
\begin{equation*}
    (\mathrm{Res}_H^G\rho)(h) := \rho(h), \; \text{for all } \; h\in H.
\end{equation*}
The restriction is a representation of $H$ by definition. If $\chi$ is the character of $\rho$, we can also define the restriction of $\chi$ to $H$ by
\begin{equation*}
    (\mathrm{Res}^G_H\chi)(h) := \chi(h), \; \text{for all } \; h\in H.
\end{equation*}
Notice that $\mathrm{Res}^G_H\chi$ is the character of $\mathrm{Res}_H^G\rho$.
\end{definition}
Let $\rho$ be a representation of $G$ and $\psi$ be a subrepresentation of the restriction $\mathrm{Res}^G_H\rho$ of $\rho$ to a subgroup $H$ of $G$. Let $V$ and $W$ be the respective representation spaces of $\rho$ and $\psi$. For $s \in G$ the vector space $\rho(s)W$ depends only on the left coset $sH$ of $s$. Thus if $\gamma$ is a left coset of $H$ we can define the subspace $W_{\gamma}$ of $V$ to be $\rho(s)W$ for any $s \in \gamma$. Clearly, the $W_{\gamma}$ are permuted by $\rho(s)$ for $s \in G$. This tells us that $\sum_{\gamma\in G/H}W_{\gamma}$ is a subrepresentation of $V$.
\begin{definition}
We say that the representation $\rho$ of $G$ is \textbf{induced} by the representation $\psi$ of $H$ on $W$ if $V$ is equal to the direct sum of the $W_{\sigma}$ for $\sigma \in G/H$.
\end{definition}
Restriction and induction of representations do not preserve irreducibility in general. We state the following theorem from \cite{serre_1977} without proof.
\begin{theorem}
Let $(W,\psi)$ be a representation of $H$, and $H$ be a subgroup of $G$. There exists a linear representation of $G$ induced by $\psi$ which we denote $\mathrm{Ind}_H^G\psi$ or $\mathrm{Ind}_H^GW$. This induced representation is unique up to isomorphism.
\end{theorem}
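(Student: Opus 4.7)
The plan is to establish existence by an explicit direct-sum construction indexed by coset representatives, and then to establish uniqueness by transporting a single isomorphism at the identity coset across all other cosets using the $G$-action.

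For existence, I will fix a set $R = \{s_1, \ldots, s_m\}$ of left coset representatives of $H$ in $G$, with $s_1 = e$, and form $V := \bigoplus_{i=1}^{m} W_i$ where each $W_i$ is a formal copy of $W$. For each $g \in G$ and each $i$, the product $gs_i$ factors uniquely as $s_{\pi_g(i)} h_{g,i}$ with $h_{g,i} \in H$, and I will define $\rho(g)$ to act on the summand $W_i$ as the composition
\begin{equation*}
W_i \xrightarrow{\sim} W \xrightarrow{\psi(h_{g,i})} W \xrightarrow{\sim} W_{\pi_g(i)}.
\end{equation*}
The homomorphism property $\rho(gg') = \rho(g)\rho(g')$ follows from the factorization $gg' s_i = g(s_{\pi_{g'}(i)} h_{g',i}) = s_{\pi_g \pi_{g'}(i)} h_{g, \pi_{g'}(i)} h_{g',i}$ together with $\psi$ being a homomorphism. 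Identifying $W_{s_iH} := W_i$ then gives $V = \bigoplus_{\gamma \in G/H} W_\gamma$, and since $h \cdot e = e \cdot h$ one has $\rho(h)|_{W_e} = \psi(h)$, so $\psi$ sits inside $\operatorname{Res}^G_H \rho$ on $W_e$ as required.

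For uniqueness, suppose $(V, \rho)$ and $(V', \rho')$ are both induced by $\psi$, with the subspaces $W_\gamma \subset V$ and $W'_\gamma \subset V'$ as described before the definition. I will fix an $H$-equivariant isomorphism $\phi_0 \colon W_e \to W'_e$ (both being identified with $W$), and define $\phi \colon V \to V'$ on each summand by $\phi(\rho(s_i) w) := \rho'(s_i)\phi_0(w)$ for $w \in W_e$. This is a well-defined linear bijection because $\rho(s_i)$ and $\rho'(s_i)$ restrict to vector-space isomorphisms $W_e \to W_{s_iH}$ and $W'_e \to W'_{s_iH}$. The key check is $G$-equivariance: writing $g s_i = s_{\pi_g(i)} h_{g,i}$ and using that $\rho(h_{g,i})|_{W_e} = \psi(h_{g,i})$, both $\phi(\rho(g)\rho(s_i)w)$ and $\rho'(g)\phi(\rho(s_i)w)$ reduce to $\rho'(s_{\pi_g(i)})\phi_0(\psi(h_{g,i})w)$, using the $H$-equivariance of $\phi_0$ in the second case.

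The main obstacle is a bookkeeping one: confirming that the construction genuinely defines an action (not just a system of maps) and that $\phi$ is independent of the choice of coset representatives. The latter follows because any two choices of representative for $sH$ differ by an element of $H$, on which $\phi_0$ intertwines the actions. A more conceptual alternative would be to build $V$ as the balanced tensor product $\mathbb{C}[G] \otimes_{\mathbb{C}[H]} W$, from which both existence and uniqueness follow by the universal property; I would mention this to the reader but keep the concrete construction as the primary proof, since subsequent sections manipulate explicit coset decompositions rather than tensor products.
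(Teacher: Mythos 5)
The paper does not prove this theorem: it is stated with the explicit remark ``We state the following theorem from \cite{serre_1977} without proof,'' so there is no argument in the paper to compare against. Your proof is correct and is the standard direct construction. The existence part builds $\bigoplus_i W_i$ over a transversal $R$, with the cocycle identity $h_{gg',i} = h_{g,\pi_{g'}(i)}h_{g',i}$ and $\pi_{gg'} = \pi_g\pi_{g'}$ giving the homomorphism property, and the identification $W_1 \cong W$ exhibiting $\psi$ inside the restriction; this matches the definition of ``induced'' given just above the theorem in the paper, since $W_{s_iH} = \rho(s_i)W_1 = W_i$. The uniqueness argument is also sound: once you note that, per the paper's definition, $W$ is literally a common subspace of both $V$ and $V'$ carrying $\psi$ (so you may take $\phi_0 = \mathrm{id}_W$), defining $\phi(\rho(s_i)w) = \rho'(s_i)\phi_0(w)$ and checking $G$-equivariance via the factorization $gs_i = s_{\pi_g(i)}h_{g,i}$ is exactly the right move. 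One small remark: your worry about independence of the choice of transversal is not actually needed for ``unique up to isomorphism'' --- a single $\phi$ built from a single fixed $R$ suffices --- though your reason why it holds anyway is correct. Your aside about $\mathbb{C}[G]\otimes_{\mathbb{C}[H]}W$ is also apt; that is essentially the route Serre takes for existence in the cited reference, so your explicit coset construction and the module-theoretic one are complementary presentations of the same theorem, and your choice to keep the concrete one is well suited to the coset manipulations used later in the paper.
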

\subsection{Clifford theory}
The objective of Clifford theory is to study the representation theory of a group via the representation theory of its normal subgroups. Here we review the central results of Clifford theory. In this section, we largely follow the outline of Clifford theory given in part 2 of \cite{Ceccherini-Silberstein2009}, and we refer the reader there for proofs and a more thorough exposition. In addition, we collect some results from \cite{ceccherini2022representation}, which will prove essential to our analysis.
 
Let $G$ be a finite group and $N \unlhd G$ be a normal subgroup of $G$. Let $\widehat{G}$ and $\widehat{N}$ denote the set of all irreducible representations of $G$ and $N$, respectively, up to equivalence. For two representations $\rho$ and $\sigma$ we write $\sigma \succeq \rho$ to denote that $\rho$ is a subrepresentation of $\sigma$ and $\rho \sim \sigma$ to denote that $\rho$ and $\sigma$ are isomorphic representations.
\begin{definition}
Let $\sigma \in \widehat{N}$ and $g \in G$. We define
\begin{equation*}
    \widehat{G}(\sigma) = \{\theta \in \widehat{G} :   Res^G_N(\theta)\succeq\sigma\}.
\end{equation*}
The \textbf{g-conjugate} of $\sigma$ is the representation $^g\sigma \in \widehat{N}$ defined by
\begin{equation}\label{act}
    ^g\sigma(n) = \sigma(g^{-1}ng),
\end{equation}
for all $n \in N$.
The \textbf{inertia subgroup} of $\sigma \in \widehat{N}$ is defined
\begin{equation*}
    I_G(\sigma) = \{g \in G : {}^g\sigma \sim \sigma\}.
\end{equation*}
\end{definition}
Note that ${}^g\sigma$ is irreducible, since any subspace invariant under ${}^g\sigma$ is also invariant under $\sigma$. Since $^{gh}\sigma(n) = \sigma((gh)^{-1}n(gh)) = \sigma(h^{-1}(g^{-1}ng)h) = {}^g(^h\sigma(n))$, \cref{act} defines an action of $G$ on $\widehat{N}$, and thus $I_G(\sigma)$ is the stabilizer of $\sigma$ in $G$. Notice that 
\begin{equation*}
    ^{m}\sigma(n) = \sigma(m^{-1}nm) = \sigma(m)^{-1}\sigma(n)\sigma(m)\hspace{3mm} \text{for } m,n\in N.
\end{equation*}
If $\chi$ is the character of $\sigma$ and $\chi_{m}$ is the character of $^{m}\sigma$, we have, for $n \in N$
\begin{equation*}
    \chi_{m}(n) = tr(^{m}\sigma(n)) = tr(\sigma(m)^{-1}\sigma(n)\sigma(m)) = tr(\sigma(n)) = \chi(n).
\end{equation*}
Thus, we have $^{m}\sigma \sim \sigma$ for $m \in N$, and so $N \leq I_{G}(\sigma)$.
\begin{lemma}\label{SubLemma}
If $\sigma$ and ${}^g\sigma$ are conjugate irreducible representations of a normal subgroup $N$ of a finite group $G$, and $I_G(\sigma)$ and $I_G({}^g\sigma)$ are the respective inertia subgroups, then $I_G(\sigma)$ and $I_G({}^g\sigma)$ are conjugate subgroups of $G$. In particular, $I_G(\sigma)$ and $I_G({}^g\sigma)$ are isomorphic.
\end{lemma}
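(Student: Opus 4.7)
The plan is to recognize this as a direct application of the orbit–stabilizer theorem for the action of $G$ on $\widehat{N}$ defined in equation (\ref{act}). The excerpt already verifies that ${}^{gh}\sigma = {}^g({}^h\sigma)$, so $g \cdot \sigma := {}^g\sigma$ is a left action of $G$ on $\widehat{N}$, and $I_G(\sigma)$ is by definition the stabilizer of $\sigma$. Under any group action, stabilizers of points in the same orbit are conjugate subgroups; specifically for a left action one has $\mathrm{Stab}(g\cdot x) = g\,\mathrm{Stab}(x)\,g^{-1}$. Applying this gives $I_G({}^g\sigma) = g\,I_G(\sigma)\,g^{-1}$.

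To make the proof self-contained rather than quoting orbit–stabilizer, I would carry out the one-line verification directly. Fix $h \in G$. Then $h \in I_G({}^g\sigma)$ means ${}^h({}^g\sigma) \sim {}^g\sigma$. Using the action property, ${}^h({}^g\sigma) = {}^{hg}\sigma$. Now apply the action of $g^{-1}$ to both sides of the equivalence ${}^{hg}\sigma \sim {}^g\sigma$ (which preserves isomorphism class, since conjugation of representations intertwines via $\rho \mapsto \rho \circ c_g$ and this is invertible). This yields ${}^{g^{-1}hg}\sigma \sim \sigma$, i.e.\ $g^{-1}hg \in I_G(\sigma)$, hence $h \in g\,I_G(\sigma)\,g^{-1}$. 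The reverse implication is identical, giving the equality $I_G({}^g\sigma) = g\,I_G(\sigma)\,g^{-1}$.

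Finally, for the isomorphism statement, conjugation by $g$ is always an inner automorphism of $G$, so its restriction to the subgroup $I_G(\sigma)$ is an injective homomorphism onto its image $g\,I_G(\sigma)\,g^{-1} = I_G({}^g\sigma)$. This establishes $I_G(\sigma) \cong I_G({}^g\sigma)$.

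There is no real obstacle here; the lemma is essentially orbit–stabilizer in disguise. The only point requiring mild care is tracking that the $G$-action on $\widehat{N}$ is a genuine left action (so that the conjugating element on the stabilizer is $g$ rather than $g^{-1}$), and that passing from ${}^{hg}\sigma \sim {}^g\sigma$ to ${}^{g^{-1}hg}\sigma \sim \sigma$ is legitimate—which it is, because ${}^{g^{-1}}(\,\cdot\,)$ is a well-defined operation on equivalence classes of representations.
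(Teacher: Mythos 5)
Your proof is correct. The paper states Lemma \ref{SubLemma} without giving a proof of its own, citing it as a standard fact of Clifford theory, so there is no paper argument to compare against --- but your argument is exactly the canonical one. The key observation, which you identify correctly, is that the assignment $g \cdot \sigma := {}^g\sigma$ is a left action of $G$ on $\widehat{N}$ (the paper verifies ${}^{gh}\sigma = {}^g({}^h\sigma)$ just above the lemma), and $I_G(\sigma)$ is precisely the stabilizer of $\sigma$ under this action. The orbit--stabilizer relation $\mathrm{Stab}(g\cdot x) = g\,\mathrm{Stab}(x)\,g^{-1}$ then gives $I_G({}^g\sigma) = g\,I_G(\sigma)\,g^{-1}$, which is stronger than the stated ``conjugate subgroups,'' and your one-line chain of equivalences is just that relation unwound. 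The only step needing care --- that applying ${}^{g^{-1}}(\cdot)$ to both sides of an equivalence ${}^{hg}\sigma \sim {}^g\sigma$ is legitimate because conjugation of representations preserves isomorphism class --- you flag and justify appropriately (an intertwiner for $\rho \sim \tau$ is literally the same linear map intertwining ${}^k\rho$ and ${}^k\tau$). The final passage to ``isomorphic'' via conjugation being an inner automorphism of $G$ is standard and correctly stated. No gaps.
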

We can now recall from \cite{Ceccherini-Silberstein2009} some central results of Clifford theory. Let $R$ be a family of coset representatives for the left $I_G(\sigma)$-cosets in $G$ with $e_G \in R$, that is
\begin{equation*}
    G = \bigsqcup_{r \in R}rI_G(\sigma).
\end{equation*}
Then $\{^g\sigma : g \in G\} = \{^r\sigma : r \in R\}$ and the representations $^r\sigma$ are pairwise inequivalent.
\begin{theorem}[\cite{Ceccherini-Silberstein2009} Theorem 2.1]\label{thm1}
Suppose that $N \unlhd G$ and $\sigma \in \widehat{N}$ and $\theta \in \widehat{G}(\sigma)$. If we set $d = [I_G(\sigma):N]$ and let $l$ denote the multiplicity of $\sigma$ in \emph{Res}$^G_N\theta$, we have:
\begin{enumerate}
    \item $\emph{Hom}_G(\emph{Ind}^G_N\sigma,\emph{Ind}^G_N\sigma) \cong \mathbb{C}^d$ as vector spaces, and
    \item $\emph{Res}_N^G \theta \cong l\bigoplus_{r \in R}{}^r\sigma$.
\end{enumerate}
\end{theorem}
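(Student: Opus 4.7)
The plan is to establish part (2) first, by the classical Clifford-theoretic analysis of isotypic components, and then to derive part (1) via Frobenius reciprocity combined with the Mackey decomposition for a normal subgroup.

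For (2), let $V$ be the underlying space of $\theta$ and decompose $\text{Res}^G_N\theta$ into its $N$-isotypic components $V = \bigoplus_\tau V_\tau$, where $\tau$ ranges over those members of $\widehat{N}$ actually appearing. The key computation is that for $g \in G$, $n \in N$, and $v \in V_\tau$,
\begin{equation*}
\theta(n)\theta(g)v \;=\; \theta(g)\theta(g^{-1}ng)v \;=\; \theta(g)\bigl({}^g\tau(n)\bigr)v,
\end{equation*}
so $\theta(g)V_\tau \subseteq V_{{}^g\tau}$, and by symmetry $\theta(g)$ is a linear isomorphism between them. Thus $G$ permutes the nonzero isotypic components, and since $\theta$ is irreducible this permutation action must be transitive; otherwise the sum over a single orbit would be a proper nonzero $G$-invariant subspace. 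By hypothesis $V_\sigma \neq 0$, so the occurring $\tau$ are exactly $\{{}^r\sigma : r \in R\}$. Each such $V_{{}^r\sigma}$ has the same dimension $l\cdot\dim\sigma$ and the same $N$-isotype multiplicity $l$, so assembling the pieces yields $\text{Res}^G_N\theta \cong l\bigoplus_{r\in R}{}^r\sigma$.

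For (1), I would apply Frobenius reciprocity to obtain
\begin{equation*}
\text{Hom}_G\bigl(\text{Ind}^G_N\sigma,\text{Ind}^G_N\sigma\bigr) \;\cong\; \text{Hom}_N\bigl(\sigma,\text{Res}^G_N\text{Ind}^G_N\sigma\bigr),
\end{equation*}
and then invoke the Mackey decomposition. Since $N$ is normal, every double coset $NgN$ collapses to an ordinary coset $gN$ and $N \cap gNg^{-1} = N$, so the general Mackey formula simplifies to $\text{Res}^G_N\text{Ind}^G_N\sigma \cong \bigoplus_{gN \in G/N}{}^g\sigma$. By Schur's lemma, $\dim\text{Hom}_N(\sigma,{}^g\sigma)$ equals $1$ when ${}^g\sigma \sim \sigma$ and $0$ otherwise, and the first case occurs precisely when $gN \subseteq I_G(\sigma)$. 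The number of such cosets is $[I_G(\sigma):N] = d$, giving the claimed isomorphism to $\mathbb{C}^d$.

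The main obstacle is the clean invocation of the Mackey formula and, in the proof of (2), verifying that the induced $G$-action on the set of occurring $N$-isotypes is genuinely transitive with uniform multiplicity across the orbit. Both points ultimately rest on the normality of $N$, which reduces the usual double-coset bookkeeping to ordinary cosets and lets $\theta(g)$ carry $V_\sigma$ cleanly onto $V_{{}^g\sigma}$; once this is in place, the dimension counts in both parts are immediate.
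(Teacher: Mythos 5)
The paper does not prove Theorem \ref{thm1}; it recalls the statement from \cite{Ceccherini-Silberstein2009} without argument, so there is no internal proof to compare against. That said, your proof is correct and is the standard one found in that reference and in most treatments of Clifford theory. For part (2), the computation $\theta(n)\theta(g)v = \theta(g)\theta(g^{-1}ng)v$ does show $\theta(g)V_\tau \subseteq V_{{}^g\tau}$, invertibility of $\theta(g)$ upgrades this to an isomorphism of isotypic components (hence uniform multiplicity $l$ across the orbit), and irreducibility of $\theta$ forces the permutation action on occurring isotypes to be transitive, with $\sigma$ in the orbit since $\theta \in \widehat{G}(\sigma)$; the occurring isotypes are then exactly $\{{}^r\sigma : r \in R\}$, which the paper has already noted are pairwise inequivalent. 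For part (1), Frobenius reciprocity plus the degenerate form of Mackey's theorem for a normal subgroup (double cosets $NgN$ collapse to cosets $gN$, and $N \cap gNg^{-1} = N$) gives $\text{Res}^G_N\text{Ind}^G_N\sigma \cong \bigoplus_{gN \in G/N}{}^g\sigma$, and Schur's lemma applied termwise counts exactly the $d = [I_G(\sigma):N]$ cosets $gN \subseteq I_G(\sigma)$. Both halves are sound and complete; no gaps.
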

The number $l = dim(\text{Hom}_N(\sigma,\text{Res}^G_N\theta))$ is called the \textbf{inertia index} of $\theta \in \widehat{G}(\sigma)$ with respect to $N$.
\begin{theorem}[Clifford Correspondence] \label{CliffCor}
Let $N \unlhd G$, $\sigma \in \widehat{N}$ and $I = I_G(\sigma)$, then
\begin{equation*}
    \widehat{I}(\sigma) \longrightarrow \widehat{G}(\sigma): \psi \longmapsto \emph{Ind}^G_I\psi
\end{equation*}
is a bijection. The inertia index of $\psi \in \widehat{I}(\sigma)$ with respect to $N$ coincides with the inertia index of $\emph{Ind}^G_I\psi$ with respect to $N$. In turn, the inertia index of $\emph{Ind}^G_I\psi$ with respect to $N$ is equal to the multiplicity $m_{\psi}$ of $\psi$ in $\emph{Ind}^I_N\psi$. Furthermore,
\begin{equation*}
    \emph{Res}^I_N\psi = m_{\psi}\sigma.
\end{equation*}
\end{theorem}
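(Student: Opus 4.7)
The plan is to establish the four assertions in Theorem \ref{CliffCor} in sequence: the restriction formula $\text{Res}^I_N \psi = m_\psi \sigma$, the irreducibility of $\text{Ind}^G_I \psi$, the bijectivity of the map $\psi \mapsto \text{Ind}^G_I \psi$, and the compatibility of inertia indices.

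First, I would prove the restriction formula by applying Theorem \ref{thm1}(2) to the pair $(I, N)$ in place of $(G, N)$, which is legitimate since $N \unlhd I$ and $\sigma \in \widehat{N}$. The inertia subgroup of $\sigma$ computed inside $I$ is $I \cap I_G(\sigma) = I$ itself, so the index $[I : I \cap I_G(\sigma)] = 1$ and the direct sum of conjugates in Theorem \ref{thm1}(2) collapses to a single summand, giving $\text{Res}^I_N \psi \cong m_\psi \sigma$ with $m_\psi$ the multiplicity. This proves the last assertion of the theorem and, crucially for what follows, shows that $\text{Res}^I_N \psi$ is $\sigma$-isotypic.

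Next, I would show that $\text{Ind}^G_I \psi$ is irreducible by computing its endomorphism space. By Frobenius reciprocity, $\text{Hom}_G(\text{Ind}^G_I \psi, \text{Ind}^G_I \psi) \cong \text{Hom}_I(\psi, \text{Res}^G_I \text{Ind}^G_I \psi)$, and Mackey's decomposition rewrites the right-hand argument as a direct sum over $(I,I)$-double coset representatives $s \in G$, with the summand at $s = e_G$ equal to $\psi$ itself. For $s \notin I$, normality of $N$ in $G$ gives $N \subseteq I \cap s^{-1}Is$, and applying the first step to ${}^s\psi$ shows that the $N$-restriction of the corresponding Mackey summand decomposes as a direct sum of conjugates ${}^t\sigma$ with $t \notin I$. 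None of these is equivalent to $\sigma$ by the definition of $I_G(\sigma)$, so any $I$-intertwiner from $\psi$ into this summand restricts to an $N$-intertwiner between disjoint $N$-isotypic components and vanishes by Schur's lemma. Only the trivial double coset contributes, and $\text{Hom}_G(\text{Ind}^G_I \psi, \text{Ind}^G_I \psi)$ is one-dimensional.

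Finally, for the bijection: injectivity follows from the same Mackey--Frobenius calculation applied to $\text{Hom}_G(\text{Ind}^G_I \psi, \text{Ind}^G_I \psi')$, which collapses to $\text{Hom}_I(\psi, \psi')$ and vanishes unless $\psi \cong \psi'$. For surjectivity, given $\theta \in \widehat{G}(\sigma)$, I decompose $\text{Res}^G_I \theta = \bigoplus_i \psi_i$ into $I$-irreducibles; then $\text{Res}^G_N \theta = \bigoplus_i \text{Res}^I_N \psi_i$ contains $\sigma$ by hypothesis, so some $\psi_i$ lies in $\widehat{I}(\sigma)$. Frobenius reciprocity then gives $\text{Ind}^G_I \psi_i \succeq \theta$, and irreducibility of the induction forces $\theta \cong \text{Ind}^G_I \psi_i$. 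The inertia index claim is then immediate from transitivity of restriction: $\text{Res}^G_N \theta = \text{Res}^I_N \psi_i = m_{\psi_i} \sigma$. The main obstacle is the Mackey--Frobenius computation for irreducibility; its content is precisely that conjugation by $s \notin I_G(\sigma)$ produces an $N$-representation inequivalent to $\sigma$, which is the defining property of the inertia subgroup, and once this vanishing of off-diagonal Hom-spaces is established the rest is bookkeeping.
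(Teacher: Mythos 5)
The paper does not prove this theorem: it is stated as a recalled result from \cite{Ceccherini-Silberstein2009}, and the surrounding text explicitly defers proofs to that reference (``we refer the reader there for proofs and a more thorough exposition''). So there is no in-paper argument to compare against. On its own terms, your Frobenius--Mackey proof is the standard one and is essentially sound: deriving $\text{Res}^I_N\psi = m_\psi\sigma$ from Theorem~\ref{thm1}(2) applied with $I$ in place of $G$, proving irreducibility of $\text{Ind}^G_I\psi$ by showing only the trivial $(I,I)$-double coset contributes to $\text{Hom}_G(\text{Ind}^G_I\psi,\text{Ind}^G_I\psi)$, and obtaining surjectivity via Frobenius reciprocity are all correct and are precisely how this is usually argued.

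One genuine error to fix: your final sentence asserts the equality of representations $\text{Res}^G_N\theta = \text{Res}^I_N\psi_i = m_{\psi_i}\sigma$, which is false. By Theorem~\ref{thm1}(2), $\text{Res}^G_N\theta \cong l\bigoplus_{r\in R}{}^r\sigma$ contains the conjugates ${}^r\sigma$ for $r\notin I$, whereas $\text{Res}^I_N\psi_i = m_{\psi_i}\sigma$ is $\sigma$-isotypic; moreover $\text{Res}^G_I\theta$ is generally larger than $\psi_i$ alone, so ``transitivity of restriction'' does not give what you wrote. The correct statement, which the Mackey computation you already carried out actually proves, is that every Mackey summand of $\text{Res}^G_I\text{Ind}^G_I\psi_i$ other than $\psi_i$ restricts to $N$ as a sum of conjugates ${}^t\sigma$ with $t\notin I$, none isomorphic to $\sigma$; hence the multiplicity $l$ of $\sigma$ in $\text{Res}^G_N\theta$ equals the multiplicity of $\sigma$ in $\text{Res}^I_N\psi_i$, i.e.\ $l = m_{\psi_i}$. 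You should phrase it as an equality of multiplicities, not of representations. (Incidentally, $\text{Ind}^I_N\psi$ in the theorem statement should read $\text{Ind}^I_N\sigma$, since $\psi$ is a representation of $I$, not of $N$; with that reading $m_\psi$ agrees with the $\sigma$-multiplicity in $\text{Res}^I_N\psi$ by one more application of Frobenius reciprocity.)
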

The following corollary of the Clifford correspondence is from section 8.1 of Serre's book\cite{serre_1977}.
\begin{corollary}\label{abelCorol}
If $N$ is an abelian normal subgroup of $G$, the degree of each irreducible representation $\rho$ of $G$ divides the index $[G:N]$ of $N$ in $G$.
\end{corollary}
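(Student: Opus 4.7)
The plan is to use the Clifford Correspondence (Theorem \ref{CliffCor}) to reduce the statement about $\rho$ to a divisibility question for a single irreducible representation of an inertia subgroup, and then to finish by applying Schur's classical bound on irreducible representation degrees.

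First I would pick any irreducible character $\sigma$ of $N$ occurring in $\text{Res}^G_N\rho$; since $N$ is abelian, $\dim\sigma=1$. Setting $I=I_G(\sigma)$, Theorem \ref{CliffCor} supplies a unique $\psi\in\widehat{I}(\sigma)$ with $\rho\cong\text{Ind}^G_I\psi$ and $\text{Res}^I_N\psi=m_\psi\sigma$. In particular $\dim\psi=m_\psi$ and $\dim\rho=[G:I]\dim\psi$, so since $[G:N]=[G:I][I:N]$ the problem reduces to showing that $\dim\psi$ divides $[I:N]$.

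Next I would pass to the quotient $I/K$, where $K=\ker\sigma\le N$. Because $\sigma$ is $I$-stable, for all $i\in I$ and $n\in N$ one has $\sigma(i^{-1}ni n^{-1})=1$, so $[I,N]\subseteq K$. This simultaneously shows that $K$ is normal in $I$ and that the image of $N$ in $I/K$ is central. Since $\text{Res}^I_N\psi=m_\psi\sigma$ is trivial on $K$, the representation $\psi$ descends to an irreducible representation of $I/K$, and $N/K$ sits inside $Z(I/K)$.

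Finally I would invoke Schur's classical theorem that the degree of any irreducible representation of a finite group $H$ divides $[H:Z(H)]$. Applied to $\psi$ viewed as an irreducible representation of $I/K$, this gives $\dim\psi\mid[I/K:Z(I/K)]$, and since $N/K\subseteq Z(I/K)$ we have $[I/K:Z(I/K)]\mid[I/K:N/K]=[I:N]$. Chaining these divisibilities yields $\dim\psi\mid[I:N]$, and hence $\dim\rho\mid[G:N]$. The main obstacle is this last step: Schur's refinement $\dim\rho\mid[G:Z(G)]$ of the elementary fact $\dim\rho\mid|G|$ is not established in the paper's preliminaries, so I would cite it as the standard result from Serre's chapter 6 rather than reprove it via the algebraic-integrality argument for character values on $Z(G)$.
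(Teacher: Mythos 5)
Your proof is correct, and it takes a genuinely different route from the one the paper relies on. The paper offers no proof here: it cites Serre's Proposition 24 (section 8.1 of \cite{serre_1977}), whose argument is an induction on $|G|$ that splits into the case where $\operatorname{Res}^G_N\rho$ is isotypic (so $N$ maps into scalars, one replaces $G$ by its faithful image making $N$ central, and Schur's bound $\deg\rho\mid[G:Z(G)]$ finishes) and the case where it is not (Clifford's theorem then exhibits a proper subgroup from which $\rho$ is induced, and one recurses). Your argument instead invokes the Clifford correspondence, as the paper has already stated it, to drop to a single irreducible $\psi$ of $I = I_G(\sigma)$ in one shot, and then applies Schur's bound directly: since $\sigma$ is one-dimensional and $I$-stable, $[I,N]\subseteq\ker\sigma =: K\subseteq\ker\psi$, so $\psi$ descends to $I/K$ with $N/K$ central there, giving $\deg\psi\mid[I/K:Z(I/K)]\mid[I:N]$ and hence $\deg\rho = [G:I]\deg\psi\mid[G:N]$. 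Each intermediate claim checks out — the $I$-stability of a degree-one $\sigma$ forces equality ${}^i\sigma=\sigma$ rather than mere isomorphism, which is what makes $K$ normal in $I$ and $N/K$ central in $I/K$; and $\operatorname{Res}^I_N\psi = m_\psi\sigma$ kills $K$ so the descent is legitimate. You are also right to flag the one external ingredient, Schur's refinement $\deg\rho\mid[G:Z(G)]$ (Proposition 17 of Serre chapter 6), which is not in the paper's preliminaries but which Serre's own proof uses as well. What your version buys is the elimination of the induction and a tighter fit with the Clifford-correspondence machinery this paper has set up; what Serre's version buys is independence from the full Clifford correspondence, needing only the weaker fact that a non-isotypic restriction forces $\rho$ to be induced from a proper subgroup.
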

Unfortunately, this correspondence does not tell us anything in the case where the inertia subgroup is all of $G$. The study of what happens in this case is known as stable Clifford theory and can be quite complicated \cite{craven_2019}. 
\begin{definition}
Let $\psi$ be a representation of $G/N$, the \textbf{inflation} $\widetilde{\psi}$ of $\psi$ is a representation of $G$ defined by setting 
\begin{equation*}
    \widetilde{\psi}(g) = \psi(gN) \hspace{3mm} \text{for all } g \in G.
\end{equation*}
If $\chi$ and $\widetilde{\chi}$ be characters of $\psi$ and $\widetilde{\psi}$ respectively, then the map $\chi \mapsto \widetilde{\chi}$ is a bijection between the irreducible characters of $G/N$ and the irreducible characters of $G$ with $N$ in their kernel (i.e. $\widetilde{\chi}(n)=\emph{deg}\;\widetilde{\chi}$). Note that deg$(\widetilde{\chi}) =$ deg$(\chi)$.
\end{definition}

Let $\sigma$ be an irreducible representation of $G$ and $\rho_1$ be the trivial representation of $N$ (the representation mapping every $n\in N$ to 1). If we suppose $\text{Res}^G_N\sigma \succeq \rho_1$ then notice that ${}^h\rho_1(g) = \rho_1(h^{-1}gh) = 1 = \rho_1(g)$ for all $h,g\in G$, thus ${}^h\rho_1 \sim \rho_1$ for all $h \in H$. Combining this observation with part \emph{2} of  \Cref{thm1} we see
\begin{equation*}
    \text{Res}_N^G \sigma \cong \bigoplus_{l = 1}^{\text{deg}(\sigma)}\rho_1.
\end{equation*}
So $N \leq \text{ker}(\sigma)$ and thus $\sigma$ is the inflation of an irreducible representation of $G/N$.
\begin{definition}
    Let $H\leq G$, and let $\sigma$ be a representation of $H$. We call a representation $\sigma'$ of $G$ an \textbf{extension} of $\sigma$ if $\emph{Res}_H^G\sigma' = \sigma$.
\end{definition}
We can now state a consequence of the Clifford correspondence that will prove very useful in our study of the Clifford group.
\begin{theorem}[The little group method; \cite{Ceccherini-Silberstein2009}, Theorem 5.1]\label{LittleTheorem}
    Let $G$ be a finite group with $N \unlhd G$ a normal subgroup. Suppose that any $\sigma\in\widehat{N}$ has an extension $\sigma'$ to its inertia group $I_G(\sigma)$. In $\widehat{N}$, define an equivalence relation $\approx$ by setting $\sigma_1 \approx \sigma_2$ if there exists $g\in G$ such that $^g\sigma_1 \sim \sigma_2$. Let $\Sigma$ be a set of representatives for the equivalence classes of $\approx$. For $\psi \in \widehat{I_G(\sigma)/N}$, let $\widetilde{\psi}$ be its inflation to $I_G(\sigma)$. Then 
    \begin{equation*}
        \widehat{G} = \{\emph{Ind}^G_{I_G(\sigma)}(\sigma'\otimes\widetilde{\psi}):\sigma \in \Sigma, \psi\in\widehat{I_G(\sigma)/N}\},
    \end{equation*}
    that is, the representations $\sigma'\otimes\widetilde{\psi}$ form a complete list of irreducible representations of $G$ and are pairwise inequivalent.
\end{theorem}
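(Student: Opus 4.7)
My plan is to factor the result through the Clifford correspondence (Theorem \ref{CliffCor}). Setting $I = I_G(\sigma)$, that correspondence identifies $\widehat{I}(\sigma)$ with $\widehat{G}(\sigma)$ via $\text{Ind}^G_I$, so it suffices to prove two things: (i) the map $\psi \mapsto \sigma'\otimes\widetilde{\psi}$ is a bijection $\widehat{I/N} \to \widehat{I}(\sigma)$; and (ii) $\widehat{G}$ is the disjoint union of the $\widehat{G}(\sigma)$ as $\sigma$ ranges over $\Sigma$. Claim (ii) follows quickly from part 2 of Theorem \ref{thm1}: for any $\theta \in \widehat{G}$, the restriction $\text{Res}_N^G\theta$ is a positive multiple of a sum over a single $G$-orbit in $\widehat{N}$, and since $\Sigma$ picks one representative from each $\approx$-class, this orbit singles out a unique $\sigma \in \Sigma$ with $\theta\in\widehat{G}(\sigma)$.

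For the forward direction of (i), I verify that $\sigma'\otimes\widetilde{\psi}$ lies in $\widehat{I}(\sigma)$ and is irreducible. Since $\widetilde{\psi}$ is trivial on $N$ by definition of inflation and $\sigma'|_N = \sigma$, the restriction $(\sigma'\otimes\widetilde{\psi})|_N$ is $\dim(\psi)$ copies of $\sigma$, placing the tensor product in $\widehat{I}(\sigma)$ provided it is irreducible. To prove irreducibility, Schur's lemma applied to the irreducible $N$-module $V_{\sigma'}$ gives $\text{End}_N(V_{\sigma'}\otimes V_\psi) \cong \text{End}_{\mathbb{C}}(V_\psi)$, so any $N$-invariant subspace of $V_{\sigma'}\otimes V_\psi$ has the form $V_{\sigma'}\otimes W$ for some $W\subseteq V_\psi$. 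Imposing full $I$-invariance is then equivalent to requiring $W$ to be $\psi$-invariant, and irreducibility of $\psi$ forces $W=0$ or $W = V_\psi$.

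For the inverse construction, given $\theta \in \widehat{I}(\sigma)$ I set $M = \text{Hom}_N(V_\sigma, V_\theta)$ and equip it with the $I$-action $(h\cdot\phi)(v) := \theta(h)\phi(\sigma'(h)^{-1}v)$. The $N$-equivariance of each $\phi$ makes this action trivial on $N$, so it descends to a representation $\psi$ of $I/N$. The evaluation map $V_{\sigma'}\otimes M \to V_\theta$ sending $v\otimes\phi$ to $\phi(v)$ is then $I$-equivariant for the action $\sigma'\otimes\widetilde{\psi}$ on the left, and is an isomorphism by a dimension count using that $\dim M$ equals the inertia index of $\theta$ (Theorem \ref{thm1}(2)). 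Irreducibility of $\theta$ then forces irreducibility of $\psi$, and the two constructions are mutually inverse up to equivalence, so inequivalent $\psi$'s produce inequivalent $\sigma'\otimes\widetilde{\psi}$'s. The main obstacle is the bookkeeping around the evaluation isomorphism, and this is exactly where the hypothesis that $\sigma$ extends to an honest (not merely projective) representation of $I$ is used: without it, the factor $\sigma'(h)^{-1}$ in the action formula would only be well-defined up to a cocycle and one would be forced to work with projective representations of $I/N$ instead.
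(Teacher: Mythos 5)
The paper does not actually supply a proof of Theorem \ref{LittleTheorem}; it is stated with a citation to \cite{Ceccherini-Silberstein2009} and used as a black box. So there is no in-paper argument to compare against, and your reconstruction should be judged on its own terms.

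Your proof is correct, and it is the standard argument. The reduction is cleanly organized: part (ii) follows from Theorem \ref{thm1}(2) because $\text{Res}_N^G\theta$ is supported on a single $G$-orbit in $\widehat{N}$, and $\Sigma$ meets each orbit exactly once, so the sets $\widehat{G}(\sigma)$ for $\sigma\in\Sigma$ partition $\widehat{G}$; part (i) is the tensoring bijection $\widehat{I/N}\to\widehat{I}(\sigma)$. Your Schur-lemma computation $\text{End}_N(V_{\sigma'}\otimes V_\psi)\cong\text{End}_{\mathbb C}(V_\psi)$ is the right way to see that every $N$-submodule has the form $V_{\sigma'}\otimes W$, and the inverse construction via $M=\text{Hom}_N(V_\sigma,V_\theta)$ with the twisted $I$-action, plus the evaluation map and the dimension count from Theorem \ref{thm1}(2) applied inside $I$ (where $I_I(\sigma)=I$), is exactly what is needed. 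Composing with the Clifford correspondence (Theorem \ref{CliffCor}) then yields the full list of irreducibles of $G$ and their pairwise inequivalence. Your closing remark is also on point: the role of the hypothesis that $\sigma$ extends to a linear (not merely projective) representation of $I_G(\sigma)$ is precisely to make $\sigma'(h)^{-1}$ a well-defined operator rather than a projective one, and when this fails one is forced into the projective-representation version, which is exactly what the paper invokes later via Proposition \ref{centerprop} in the central-extension setting. One very small quibble: the theorem's conclusion as phrased in the paper says ``the representations $\sigma'\otimes\widetilde{\psi}$ form a complete list of irreducible representations of $G$,'' but these live on $I_G(\sigma)$; it is their inductions to $G$ that form the list, as your proof correctly treats.
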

\begin{definition}
Let $Q$, $G$, and $N$ be groups. If we have an injective homomorphism $\iota: N \rightarrow G$, and a surjective homomorphism $\pi: G \rightarrow Q$, and if $\iota(N) = \text{ker}(\pi)$, then we call $G$ an \textbf{extension} of $Q$ by $N$. If $\iota(N)$ is contained in the center of $G$, then we call $G$ a \textbf{central extension}. A group extension $G$ is often written as a short exact sequence
\begin{equation*}
    1\rightarrow N\xrightarrow{\iota} G\xrightarrow{\pi} Q\rightarrow 1.
\end{equation*}
\end{definition}
\Cref{LittleTheorem} classifies the irreducible representations of a group extension $G$ under the constraint that the irreducible representations $\sigma$ of the normal subgroup $N$ can always be extended to representations $\sigma'$ of the corresponding inertia subgroup $I_G(\sigma)$. 
 
Let $1 \rightarrow B \xrightarrow[]{\iota} G \xrightarrow[]{\pi} H \rightarrow 1$ be a central extension. When $G$ is a central extension and $G \not\cong H\times B$, the little group method does not apply. To examine this case we require more specialized machinery. A \textbf{section} of the extension is a map $t:H\rightarrow G$ which is a right inverse for $\pi$, that is
\begin{equation*}
    \pi(t(h)) = h
\end{equation*}
for all $h \in H$. We call the section \textbf{normalized} if $t(e_H) = e_G$. For $h,k \in H$ we have 
\begin{equation*}
    \pi[t(h)t(k)] = \pi(t(h))\pi(t(k)) = hk = \pi(t(hk)),
\end{equation*}
so there exists a unique $b(h,k) \in B$ such that 
\begin{equation*}
    t(h)t(k) = t(hk)\iota[b(h,k)].
\end{equation*}
Let $\widehat{H}^{\alpha}$ denote all the irreducible projective representations of a finite group $H$ with factor set $\alpha$. We may now state a version of the little group method for central extensions.
\begin{proposition}[\cite{ceccherini2022representation}, Proposition 7.24]\label{centerprop}
For every $\xi \in \widehat{B}$ we have $I_G(\xi) = G$. Let
\begin{equation*}
    \eta(h,k) = \xi(b(h,k)), 
\end{equation*} 
Let $\bar{\eta}(h,k) = (\eta(h,k))^{-1}$, then the map 
\begin{equation*}
    \widehat{H}^{\bar{\eta}} \longrightarrow \widehat{G}(\xi):
    \Phi \longmapsto\Theta
\end{equation*}
is a bijection, with $\Theta$ defined by 
\begin{equation}\label{thetaeq}
    \Theta(t(h)b) = \xi(b)\Phi(h)
\end{equation}
for all $h \in H(\xi)$, $b \in B$. Finally,
\begin{equation*}
    \widehat{G} = \left\{ \Theta: \xi \in \widehat{B}, \Theta \text{ as in \emph{\cref{thetaeq}}}, \Phi \in \widehat{I_G(\xi)/B}^{\bar{\eta}}\right\}.
\end{equation*}
\end{proposition}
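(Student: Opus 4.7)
The proof has three natural moves. First, since $\iota(B)$ is central in $G$, conjugation by any $g\in G$ fixes $\iota(B)$ pointwise, so $^g\xi = \xi$ for every $\xi\in\widehat B$, giving $I_G(\xi)=G$. Moreover Schur's lemma applied to any irreducible $\Theta\in\widehat G$ forces each $\Theta(\iota(b))$ to be a scalar, and the map $b\mapsto \Theta(\iota(b))$ is a character of $B$; hence $\Theta\in\widehat G(\xi)$ for a uniquely determined $\xi\in\widehat B$, and $\widehat G=\bigsqcup_{\xi\in\widehat B}\widehat G(\xi)$. This reduces the classification to a fixed $\xi$.

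Second, fix $\xi$ and verify that (\ref{thetaeq}) defines a map $\widehat H^{\bar\eta}\to\widehat G(\xi)$. Every $g\in G$ factors uniquely as $g=t(h)\iota(b)$, because $t$ is a section and $\iota(B)=\ker\pi$, so $\Theta$ is unambiguously defined as a function $G\to GL(V)$. To check the homomorphism property, use centrality of $\iota(B)$ together with $t(h)t(k)=t(hk)\iota(b(h,k))$ to compute
\begin{equation*}
    t(h)\iota(b)\cdot t(k)\iota(c)=t(hk)\iota\bigl(b(h,k)bc\bigr),
\end{equation*}
so that $\Theta$ is multiplicative precisely when $\Phi(h)\Phi(k)$ and $\Phi(hk)$ differ by the scalar $\eta(h,k)$, which is exactly the projective relation for factor set $\bar\eta$ under the conventions fixed in the preliminaries. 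The formula $\Theta(\iota(b))=\xi(b)\cdot\mathrm{id}$ then gives $\Theta|_{\iota(B)}=\dim(V)\cdot\xi$, so $\Theta\in\widehat G(\xi)$.

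Third, irreducibility transfers both ways: a subspace $W\subseteq V$ is $\Theta(G)$-invariant iff it is $\{\Phi(h):h\in H\}$-invariant, since each $\Theta(\iota(b))=\xi(b)\cdot\mathrm{id}$ preserves every subspace and $G$ is generated by $\iota(B)$ and $t(H)$. Hence $\Theta$ is irreducible iff $\Phi$ is. For the inverse of the correspondence, given $\Theta\in\widehat G(\xi)$ set $\Phi(h):=\Theta(t(h))$; applying $\Theta$ to the cocycle identity recovers the factor-set relation, and the resulting $\Phi\in\widehat H^{\bar\eta}$ is manifestly inverse to (\ref{thetaeq}). Combined with the disjoint decomposition of $\widehat G$ from the first paragraph, this yields the complete list of irreducibles claimed.

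The principal obstacle is the bookkeeping in the second step: the convention $\Phi(g)\Phi(h)=\Phi(gh)\alpha(g,h)$ together with the orientation of the cocycle $t(h)t(k)=t(hk)\iota(b(h,k))$ determines whether the factor set of $\Phi$ comes out as $\eta$ or $\bar\eta$, so one must be meticulous about inverses to land on $\bar\eta$ as stated. Everything else is routine bookkeeping from centrality, Schur's lemma, and the unique factorization $g=t(h)\iota(b)$ provided by the section.
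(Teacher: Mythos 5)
The proposition you were asked to prove is cited directly from \cite{ceccherini2022representation} (Proposition 7.24) and is stated in the paper without proof, so there is no internal argument against which to compare your attempt. Judged on its own terms, your outline is the standard one and its main ideas are sound: centrality gives $I_G(\xi)=G$ immediately; Schur's lemma forces $\Theta|_{\iota(B)}$ to act by a scalar character $\xi$ and hence partitions $\widehat G$ into the sets $\widehat G(\xi)$; the unique factorization $g=t(h)\iota(b)$ makes the formula $\Theta(t(h)\iota(b))=\xi(b)\Phi(h)$ well defined; multiplicativity of $\Theta$ is equivalent to the projective relation for $\Phi$; and irreducibility transfers because $\iota(B)$ acts by scalars. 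The inverse correspondence $\Phi(h):=\Theta(t(h))$ closes the loop.

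The one place where you do not actually close the argument is the point you yourself flag as ``the principal obstacle.'' Carrying out the computation with the paper's conventions ($\Phi(g)\Phi(h)=\Phi(gh)\alpha(g,h)$ for the factor set, and $t(h)t(k)=t(hk)\iota(b(h,k))$ for the cocycle), one gets
\begin{equation*}
  \Theta\bigl(t(h)\iota(b)\bigr)\,\Theta\bigl(t(k)\iota(c)\bigr)
  =\xi(b)\xi(c)\,\Phi(h)\Phi(k),
  \qquad
  \Theta\bigl(t(h)\iota(b)\,t(k)\iota(c)\bigr)
  =\xi(b)\xi(c)\,\eta(h,k)\,\Phi(hk),
\end{equation*}
so multiplicativity of $\Theta$ forces $\Phi(h)\Phi(k)=\Phi(hk)\eta(h,k)$, i.e.\ factor set $\eta$, not $\bar\eta$. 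Your sentence asserting that this ``is exactly the projective relation for factor set $\bar\eta$'' is therefore not justified under the paper's own definition; the inverse appears either from a sign/orientation convention in the cited source that differs from the paper's, or it should simply be $\eta$ in this paper's notation. You acknowledge the issue honestly, but you assert the conclusion rather than resolve it, and as written that step is a gap: one cannot ``land on $\bar\eta$'' without changing one of the two conventions. Everything else in your sketch is correct.
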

\section{The Pauli and Clifford groups}
\subsection{Definitions}
Here we recall the definitions of the Pauli and Clifford groups.
\begin{definition}
Let $U(d)$ be the set of d-by-d unitary matrices, where d is some power of 2. This has a standard representation on $\mathbb{C}^d$, the complex vector space \cite{Fulton1991RepresentationTA}. Let $v_0, v_1$ be an orthonormal basis of $\mathbb{C}^2$ and define the linear operators $X$, $Y$ and $Z$ by
\begin{equation*}
    Xv_l = v_{l+1}, \hspace{4mm} Zv_l = (-1)^lv_l, \hspace{4mm} Yv_l = -iZXv_l = (-1)^{l}iv_{l+1}
\end{equation*}
for $l \in \{0,1\}$, with addition over indices being modulo 2. These operators are unitary. We define the \textbf{n-qubit Pauli group} $\mathcal{P}_n$ as the subset of the unitary group $U(2^n)$ consisting of all $n$-fold tensor products of elements of $\mathcal{P}_1 := \left<X,Z,iI_2\right>$, where $I_n$ is the identity on $\mathbb{C}^{n}$.
\end{definition}
$\mathcal{P}_1$ is a group of order $16$ with centre $|Z(\mathcal{P}_1)| = 4$. Since $\mathcal{P}_n$ consists of $n$-fold tensor products of elements of $\mathcal{P}_1$ it is a central product of copies of $\mathcal{P}_1$, and thus $|\mathcal{P}_n| = 4^{n+1}$. The operators $X$, $Y$, and $Z$ can be written in matrix form with respect to the eigenbasis of $Z$ as
\begin{equation*}
    X = \begin{bmatrix}0&1\\
                       1&0\end{bmatrix}
    \hspace{4mm}Z =\begin{bmatrix}1&0\\
                                  0&-1\end{bmatrix}
    \hspace{4mm}Y = \begin{bmatrix}0&-i\\
                                   i&0\end{bmatrix}.
\end{equation*}
These are known as the Pauli matrices. 
\begin{definition}
The \textbf{n-qubit Clifford group} $\emph{Cliff}(n)$ is the normalizer of the $n$-qubit Pauli group in the unitary group
\begin{equation*}
    \emph{Cliff}(n) = \left\{U\in U(2^n) : U\mathcal{P}_nU^\dagger\subseteq\mathcal{P}_n\right\}.
\end{equation*}
Since in quantum information theory global phases have no effect on measurement outcomes, it is common to define the Clifford group modulo phases. We denote this group
\begin{equation*}
    \mathcal{C}_n = \left\{U\in U(2^n) : U\mathcal{P}_nU^\dagger\subseteq\mathcal{P}_n\right\}/U(1),
\end{equation*}
and will call it the \textbf{n-qubit projective Clifford group} to differentiate it from other ways the Clifford group is defined in the literature.
\end{definition}
We would like to understand the representation theory of the projective Clifford group.
 
The Clifford group $\text{Cliff}(n)$ is generated by the Hadamard ($H$) and Phase ($S$) gates on each qubit (i.e. on each tensor factor), and Controlled-Z ($CZ$) gate on each pair of qubits, along with phases. In matrix form these gates are
\begin{equation*}
    H = \frac{1}{\sqrt{2}}\begin{bmatrix}1&1\\
                                        1&-1\end{bmatrix}
    \hspace{4mm}S =\begin{bmatrix}1&0\\
                                  0&i\end{bmatrix}
    \hspace{4mm}CZ =\begin{bmatrix}1&0&0&0\\
                                  0&1&0&0\\
                                  0&0&1&0\\
                                  0&0&0&-1\end{bmatrix}.
\end{equation*}
Operators that are $n$-fold tensor products only of $2$-by-$2$ matrices are said to consist only of single-qubit operations. If an operator has $4$-by-$4$ matrices in its tensor decomposition, such as $CZ$ that do not decompose further into tensor factors, it is said to contain multi-qubit operations.
 
Multi-qubit Pauli operators either commute or anticommute. Notice that the group $C_4 = \langle iI_n\rangle$ of phases in $\mathcal{P}_n$ is the centre of $\mathcal{P}_n$. 
\begin{definition}
Define the \textbf{n-qubit projective Pauli group} to be $\widetilde{\mathcal{P}}_n = \mathcal{P}_n/C_4$. Since $C_4$ contains the commutator subgroup $C_2 = \langle -I_n\rangle$ of $\mathcal{P}_n$, we have that $\widetilde{\mathcal{P}}_n$ is abelian.
\end{definition}
Notice, $\widetilde{\mathcal{P}}_n$ is a normal subgroup of $\mathcal{C}_n$. Since $\widetilde{\mathcal{P}}_n \cong \mathbb{Z}^{2n}_2$ we will often just write $\mathbb{Z}^{2n}_2$ for the projective Pauli group. 
\subsection{The symplectic structure of the Clifford group}\label{proofsection}
The quotient of the Clifford group by the Pauli group and phases is essential to the stable Clifford theory of the Clifford group. We will need the following proposition, which we present without proof.
\begin{proposition}[\cite{Selinger_2015}, Prop 3.3]\label{prop3}
Let $\phi:\mathcal{P}_n \rightarrow \mathcal{P}_n$ be an automorphism of the Pauli group that fixes scalars. That is, $\phi(i^lI_{2^n}) = i^lI_{2^n}$. Then there exists $U \in \emph{Cliff}(n)$ unique up to phases such that for all $P \in \mathcal{P}_n$ we have $UPU^{\dagger} = \phi(P)$.
\end{proposition}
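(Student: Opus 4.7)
The plan is to realize $\phi$ as conjugation by a unitary intertwiner between two copies of the defining representation $\rho:\mathcal{P}_n\hookrightarrow U(2^n)$. Composition with $\phi$ gives a second representation $\rho\circ\phi$ on the same space $\mathbb{C}^{2^n}$, and producing a unitary $U$ with $U\rho(P)U^\dagger=\rho(\phi(P))$ for all $P\in\mathcal{P}_n$ is precisely the conclusion, since $\rho$ is the inclusion map. Existence and essential uniqueness of such a $U$ will follow from the classification of irreducible representations of $\mathcal{P}_n$ together with Schur's lemma.

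First I would classify the irreducibles of $\mathcal{P}_n$. A direct computation shows that the commutator subgroup is $\{\pm I_{2^n}\}$, so the abelianization has order $2^{2n+1}$, contributing $2^{2n+1}$ one-dimensional characters. Counting conjugacy classes—the four central elements are singletons, while each non-central $P$ is conjugate in $\mathcal{P}_n$ only to $-P$—gives $4+(4^{n+1}-4)/2=2^{2n+1}+2$ classes total, hence exactly two higher-dimensional irreducibles. The identity $\sum d_i^2=|\mathcal{P}_n|=4^{n+1}$ then forces both to have degree $2^n$, and they are distinguished by whether the central element $iI_{2^n}$ acts as multiplication by $+i$ or by $-i$; the defining representation $\rho$ is the one with eigenvalue $+i$.

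Because $\phi$ fixes scalars, $\rho\circ\phi$ sends $iI_{2^n}$ to $iI_{2^n}$, so by the classification $\rho\circ\phi\cong\rho$. Pick an intertwiner $U:\mathbb{C}^{2^n}\to\mathbb{C}^{2^n}$; since $UU^\dagger$ intertwines $\rho$ with itself, Schur's lemma forces $UU^\dagger=cI$ for some $c>0$, so after rescaling we may take $U$ unitary. Rewriting the intertwining condition yields $UPU^\dagger=\phi(P)$ for all $P\in\mathcal{P}_n$, and in particular $U\mathcal{P}_nU^\dagger=\phi(\mathcal{P}_n)=\mathcal{P}_n$, so $U\in\text{Cliff}(n)$. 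Another application of Schur to the irreducible $\rho$ shows that any two such intertwiners differ by a nonzero complex scalar; if both are unitary, the scalar lies in $U(1)$, giving uniqueness up to phase.

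The main obstacle is the uniqueness-of-representation step—establishing that $\rho$ is, up to isomorphism, the only irreducible representation of $\mathcal{P}_n$ of degree $2^n$ sending $iI_{2^n}$ to $iI_{2^n}$. This is a finite-group analogue of the Stone--von Neumann theorem. The character count sketched above is the most economical route; alternatively one can exhibit both degree-$2^n$ irreducibles explicitly by inducing a linear character from the maximal abelian subgroup $\langle iI_{2^n},Z_1,\dots,Z_n\rangle$, which has index $2^n$ in $\mathcal{P}_n$.
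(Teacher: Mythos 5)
The paper states this proposition without proof, citing Prop.~3.3 of Selinger's \emph{Generators and relations for $n$-qubit Clifford operators}, so there is no in-paper argument to compare against. Your proposal is a self-contained proof and, as far as I can tell, a correct one: it is the finite Stone--von Neumann/Schur argument. You show the defining representation is, up to isomorphism, the unique irreducible of $\mathcal{P}_n$ on which $iI_{2^n}$ acts by $+i$, hence $\rho\circ\phi\cong\rho$, and Schur's lemma supplies a unitary intertwiner that realizes $\phi$ as conjugation and is unique up to a scalar of modulus one. This is a genuinely different route from what Selinger does (his paper works with an explicit normal form and generators for Clifford operators); yours is more conceptual and uses standard representation theory rather than a normal-form calculation.

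Two small points you should tighten. First, when you deduce $d_1=d_2=2^n$ from $d_1^2+d_2^2=2^{2n+1}$, some extra input is needed: either the bound $d_i^2\le[\mathcal{P}_n:Z(\mathcal{P}_n)]=4^n$, or divisibility of $d_i$ by a power of $2$ (which forces $d_1=d_2$ by a parity argument), or the observation that the two non-linear irreducibles are complex conjugates of one another and hence have equal degree. Second, you implicitly use that the inclusion $\rho$ is irreducible; this is needed both to place $\rho$ among the two degree-$2^n$ irreducibles and to apply Schur's lemma, and it follows from the fact that the $n$-qubit Pauli matrices span $M_{2^n}(\mathbb{C})$ (a remark the paper itself makes in the proof of Theorem~\ref{QuotThm}), so it is worth saying explicitly. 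Also note the minor slip: to rescale $U$ to be unitary you should consider $U^\dagger U$ (which intertwines $\rho$ with itself) rather than $UU^\dagger$ (which intertwines $\rho\circ\phi$ with itself); both are positive scalars by Schur, so the conclusion is the same, but the statement as written mislabels which representation the operator commutes with.
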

\begin{remark}
This says that $\mathcal{C}_n = \emph{Aut}_{\langle i\rangle}(\mathcal{P}_n)$, that is, the Clifford group consists of the automorphisms of the Pauli group that fix the centre.
\end{remark}
Following arguments from \cite{Gross2017} and \cite{debeaudrap2012linearized} we can show the following.
\begin{theorem}\label{QuotThm}
The quotient of the Clifford group $\emph{Cliff}(n)$ by the Pauli group and phases is
\begin{equation*}
    \mathcal{C}_n/\widetilde{\mathcal{P}}_n \cong Sp(2n,2),
\end{equation*}
the symplectic group of degree $2n$ over $\mathbb{Z}_2$.
\end{theorem}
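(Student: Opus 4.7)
The plan is to exhibit the isomorphism explicitly via the conjugation action of $\mathcal{C}_n$ on the projective Pauli group $\widetilde{\mathcal{P}}_n \cong \mathbb{Z}_2^{2n}$, and to show this action preserves a natural symplectic form. First, I would set up this form. Every element of $\mathcal{P}_n$ can be written as $i^c X^a Z^b$ for $a,b \in \mathbb{Z}_2^n$ and $c \in \mathbb{Z}_4$, so a class in $\widetilde{\mathcal{P}}_n$ is determined by the pair $(a|b) \in \mathbb{Z}_2^{2n}$. Two Paulis commute or anticommute, and a direct computation with the relation $XZ = -ZX$ shows that representatives $P_{(a|b)}, P_{(a'|b')}$ satisfy $P_{(a|b)}P_{(a'|b')} = (-1)^{\omega((a|b),(a'|b'))}P_{(a'|b')}P_{(a|b)}$, where $\omega((a|b),(a'|b')) = a\cdot b' + a'\cdot b \pmod 2$ is the standard symplectic form on $\mathbb{Z}_2^{2n}$.

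Next I would define the map $\Phi:\mathcal{C}_n \rightarrow \text{Aut}(\widetilde{\mathcal{P}}_n)$ by conjugation: for $U \in \mathcal{C}_n$, $\Phi(U)$ sends the class of $P$ to the class of $UPU^\dagger$. This is well-defined because $U$ normalizes $\mathcal{P}_n$ and the phase is killed by the quotient. Since conjugation preserves the (anti)commutation structure, $\Phi(U)$ preserves $\omega$, so $\Phi$ lands in $Sp(2n,2)$. To identify the kernel, suppose $U \in \mathcal{C}_n$ satisfies $UPU^\dagger \equiv P \pmod{U(1)}$ for every $P \in \mathcal{P}_n$. Then lifting to $\emph{Cliff}(n)$, $U$ commutes with every Pauli up to a scalar, and since the Paulis act irreducibly on $\mathbb{C}^{2^n}$, Schur's lemma forces a genuine Pauli lift of $U$ (modulo phases). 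Concretely, $U$ itself lies in $\mathcal{P}_n$ modulo $U(1)$, so $\ker \Phi = \widetilde{\mathcal{P}}_n$.

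The remaining step, which I expect to be the main obstacle, is surjectivity. Given a symplectic transformation $g \in Sp(2n,2)$, I need to produce $U \in \mathcal{C}_n$ with $\Phi(U) = g$. The strategy is to lift $g$ to an automorphism $\phi$ of the full Pauli group $\mathcal{P}_n$ that fixes the scalars $\langle i\rangle$, and then invoke Proposition \ref{prop3} to obtain $U$. The subtle point is consistency of signs: one must define $\phi(i^c X^a Z^b) = i^{c} \cdot \epsilon(a,b) \cdot P_{g(a|b)}$ for a suitable sign function $\epsilon:\mathbb{Z}_2^{2n}\to\{\pm 1\}$ (or more generally a $\mathbb{Z}_4$-valued phase on the generators) so that $\phi$ respects the group law $P_{(a|b)}P_{(a'|b')} = (-1)^{a'\cdot b}P_{(a+a'|b+b')}$. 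Showing that the cocycle obstruction vanishes amounts to exhibiting such signs on a set of symplectic generators of $g$ (for instance the standard generators of $Sp(2n,2)$ realized by the Hadamard, phase, and $CZ$ gates, as recorded in the paragraph after the Clifford group definition) and verifying that the resulting $\phi$ is a group homomorphism fixing $i$. Once $\phi$ exists, Proposition \ref{prop3} yields $U \in \emph{Cliff}(n)$ unique up to phase with $UPU^\dagger = \phi(P)$, and its image in $\mathcal{C}_n$ satisfies $\Phi(U) = g$. Combining $\ker\Phi = \widetilde{\mathcal{P}}_n$ with surjectivity and the first isomorphism theorem gives $\mathcal{C}_n/\widetilde{\mathcal{P}}_n \cong Sp(2n,2)$.
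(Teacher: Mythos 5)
Your overall strategy mirrors the paper's: realize $\mathcal{C}_n$ as automorphisms of the Pauli group fixing scalars (the Remark after Proposition \ref{prop3}), show that conjugation on $\widetilde{\mathcal{P}}_n \cong \mathbb{Z}_2^{2n}$ preserves the symplectic form coming from (anti)commutation, identify the kernel, and use Proposition \ref{prop3} to handle surjectivity. Your kernel argument via Schur's lemma is actually cleaner and more explicit than what appears in the paper (which only remarks that the pair $(\Gamma, f)$ determines $U$ up to phase, leaving the reader to deduce that fixing $\Gamma$ alone determines $U$ up to phase \emph{and} a Pauli). The compression ``Schur's lemma forces a Pauli lift'' does elide an intermediate step --- the scalars $\alpha_P$ with $UPU^\dagger = \alpha_P P$ define a character of $\widetilde{\mathcal{P}}_n$, every such character is realized by conjugation by some Pauli $Q$, and then $Q^{-1}U$ lies in the commutant and is scalar by irreducibility --- but the idea is sound.

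The weak point is surjectivity, which is where the real content of the theorem lives, and your sketch conflates two distinct strategies without fully carrying out either. If you intend to ``exhibit signs on generators of $Sp(2n,2)$ realized by $H$, $S$, $CZ$,'' then you are implicitly invoking the nontrivial fact that the symplectic images of these gates generate $Sp(2n,2)$; but if you accept that fact, you get surjectivity immediately by composing gates and do not need to define $\phi$ or invoke Proposition \ref{prop3} at all. If instead you intend to construct, for an arbitrary $g \in Sp(2n,2)$, a sign function $\epsilon$ making $\phi(i^c X^a Z^b) = i^c \epsilon(a,b) P_{g(a|b)}$ a well-defined automorphism, then the phrase ``verify on generators of $g$'' does not accomplish this: the cocycle consistency must hold globally on $\mathcal{P}_n$, and you have not explained why such an $\epsilon$ exists. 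The paper closes exactly this gap concretely: it shows that any mod-2 symplectic matrix $\Gamma$ admits a lift $\widetilde{\Gamma}$ that is symplectic mod 4 (by adjusting the columns $\mathbf{v}_j$ to $\overline{\mathbf{v}}_j = \mathbf{v}_j + 2\mathbf{x}_j$ so the mod-4 pairing relations hold), and then $W_{\mathbf{x}} \mapsto W_{\widetilde{\Gamma}\mathbf{x}}$ is automatically a scalar-fixing automorphism. That mod-4 lift, or an equivalent verification that the gate images generate, is the missing ingredient in your argument.
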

\begin{proof}
For $\mathbf{x} = (\mathbf{p},\mathbf{q}) \in \mathbb{Z}^{2n}$ define the \textbf{Weyl Operator}
\begin{equation*}
    W_\mathbf{x} = W_{\mathbf{p},\mathbf{q}} = i^{-\mathbf{p}\cdot 
    \mathbf{q}}(Z^{p_1}X^{q_1})\otimes\cdots\otimes(Z^{p_n}X^{q_n}).
\end{equation*}
Clearly, all Weyl operators are elements of the Pauli group $\mathcal{P}_n$, and any element of the Pauli group is a Weyl operator up to a factor of $i$ to some power. Weyl operators only depend on $\mathbf{x}$ modulo $4$, since
\begin{equation}\label{Weyl1}
    W_{\mathbf{x}+2\mathbf{z}} = (-1)^{[\mathbf{x},\mathbf{z}]}W_{\mathbf{x}},
\end{equation}
where we have introduced the $\mathbb{Z}$-valued symplectic form $[\cdot,\cdot]$ on $\mathbb{Z}^{2n}$
\begin{equation*}
    [\mathbf{x},\mathbf{z}] = [(\mathbf{p},\mathbf{q}),(\mathbf{p}',\mathbf{q}')] = \mathbf{p}\cdot\mathbf{q}'-\mathbf{q}\cdot\mathbf{p}'.
\end{equation*}
We will use this form when $\mathbf{x},\mathbf{y}\in\mathbb{Z}^{2n}_4$, and interpret accordingly. For example, by direct computation we have
\begin{equation*}
    W_{\mathbf{x}}W_{\mathbf{y}} = i^{[\mathbf{x},\mathbf{y}]}W_{\mathbf{x}+\mathbf{y}}.
\end{equation*}
Then for all $\mathbf{x},\mathbf{y}\in \mathbb{Z}^{2n}_4$, we have
\begin{equation*}
    W_{\mathbf{x}}W_{\mathbf{y}} = i^{[\mathbf{x},\mathbf{y}]}W_{\mathbf{x}+\mathbf{y}} =  i^{-[\mathbf{y},\mathbf{x}]}W_{\mathbf{y}+\mathbf{x}} = i^{-2[\mathbf{y},\mathbf{x}]}W_{\mathbf{y}}W_{\mathbf{x}} = (-1)^{[\mathbf{x},\mathbf{y}]}W_{\mathbf{y}}W_{\mathbf{x}}.
\end{equation*}
Thus the commutation relation depends on $[\mathbf{x},\mathbf{y}]$ mod $2$. By definition if $U \in \text{Cliff}(n)$ then for every $\mathbf{x} \in \mathbb{Z}^{2n}_2$, $UW_{\mathbf{x}}U^{\dagger}$ is proportional to a Weyl operator  $W_{\mathbf{x}'}$ by some power of $i$ and by \cref{Weyl1} we can take $\mathbf{x}' \in \mathbb{Z}^{2n}_2$. We define the function $g: \mathbb{Z}_2^{2n} \rightarrow \mathbb{Z}_4$ where $i^{g(\mathbf{x})}W_{\mathbf{x}'} = UW_{\mathbf{x}}U^{\dagger}$. Since conjugation preserves commutation relations, we have 
\begin{equation*}
    (-1)^{[\mathbf{x},\mathbf{y}]}W_{\mathbf{y}'}W_{\mathbf{x}'} = W_{\mathbf{x}'}W_{\mathbf{y}'} = (-1)^{[\mathbf{x}',\mathbf{y}']}W_{\mathbf{y}'}W_{\mathbf{x}'}.
\end{equation*}
Thus the map $\Gamma: \mathbf{x} \mapsto \mathbf{x}'$ preserves the symplectic form $[\cdot,\cdot]$. Furthermore,
\begin{equation*}
\begin{split}
    i^{g(\mathbf{x}+\mathbf{y})+[\mathbf{x},\mathbf{y}]}W_{(\mathbf{x}+\mathbf{y})'} &= Ui^{[\mathbf{x},\mathbf{y}]}W_{\mathbf{x}+\mathbf{y}}U^{\dagger} = UW_{\mathbf{x}}W_{\mathbf{y}}U^{\dagger}
    \\
    &= UW_{\mathbf{x}}U^{\dagger}UW_{\mathbf{y}}U^{\dagger} = i^{g(\mathbf{x})+g(\mathbf{y})}W_{\mathbf{x}'}W_{\mathbf{y}'} 
    \\
    &= i^{[\mathbf{x},\mathbf{y}]+g(\mathbf{x})+g(\mathbf{y})}W_{\mathbf{x}'+\mathbf{y}'}.
\end{split}
\end{equation*}
Thus 
\begin{equation*}
    i^{g(\mathbf{x}+\mathbf{y})}W_{(\mathbf{x}+\mathbf{y})'} = i^{g(\mathbf{x})+g(\mathbf{y})}W_{\mathbf{x}'+\mathbf{y}'}.
\end{equation*}
Then
\begin{equation*}
    W_{(\mathbf{x}+\mathbf{y})'} = \pm W_{\mathbf{x}'+\mathbf{y}'}.
\end{equation*}
So by \cref{Weyl1}, $\Gamma$ is compatible with addition in $\mathbb{Z}^{2n}_2$. Since $\mathbb{Z}_2$ has only the scalars $0$ and $1$, we deduce that $\Gamma$ is linear and thus an element of the symplectic group $Sp(2n,2)$. Then for each $U \in \text{Cliff}(n)$ there is a $\Gamma \in Sp(2n,2)$ and a function $g: \mathbb{Z}_2^{2n} \rightarrow \mathbb{Z}_4$ such that 
\begin{equation*}
    UW_{\mathbf{x}}U^{\dagger} = i^{g(\mathbf{x})}W_{\Gamma(\mathbf{x})}.
\end{equation*}
Now notice that the $n$-qubit Pauli matrices form a basis of the vector space $M_{2^n}(\mathbb{C})$ of all $2^n$-by-$2^n$ matrices. If we specify the action of $U \in \text{Cliff}(n)$ on a generating set of the $\mathcal{P}_n$, then we determine $U$ up to a phase since $U' = e^{i\theta}U$ has the same action as $U$ by conjugation. From \Cref{prop3} we know that for any scalar fixing automorphism $\phi$ of $\mathcal{P}_n$ there exists some $U \in \text{Cliff}(n)$ such that
\begin{equation*}
    \phi(P) = UPU^{\dagger}
\end{equation*}
for all $P \in \mathcal{P}_n$. 
 
For any linear $\Gamma: \mathbb{Z}^{2n}_4 \rightarrow  \mathbb{Z}^{2n}_4$ that preserves the symplectic product modulo $4$, we can define the map $\Phi : \mathcal{P}_n \rightarrow \mathcal{P}_n$ by
\begin{equation*}
    \Phi(W_{\mathbf{x}}) = W_{\Gamma\mathbf{x}}, \hspace{3mm} \text{for all } \; \mathbf{x} \in \mathbb{Z}_4^{2n}.
\end{equation*}
To see this is well defined, notice that $W_{\Gamma\mathbf{x}}$ is expressible as a linear combination of other Weyl operators only if $\pm W_{\Gamma\mathbf{y}} = W_{\Gamma\mathbf{x}}$ for some $\mathbf{y} \in \mathbb{Z}_4^{2n}$. Then by dimension counting and \cref{Weyl1} we have $\Gamma\mathbf{y} = \Gamma(\mathbf{x}+2\mathbf{z}) = \Gamma\mathbf{x}+2\Gamma\mathbf{z}$. Thus the sign is given by $(-1)^{[\Gamma\mathbf{x},\Gamma\mathbf{z}]} = (-1)^{[\mathbf{x},\mathbf{z}]}$, so $\Phi$ is well defined. Furthermore, since
\begin{equation*}
    \Phi(W_{\mathbf{x}})\Phi(W_{\mathbf{y}}) = W_{\Gamma\mathbf{x}}W_{\Gamma\mathbf{y}} = i^{[\mathbf{x},\mathbf{y}]}W_{\Gamma\mathbf{x}+\Gamma\mathbf{y}}
    = i^{[\mathbf{x},\mathbf{y}]}\Phi(W_{\mathbf{x}+\mathbf{y}}) = \Phi(W_{\mathbf{x}}W_{\mathbf{y}}),
\end{equation*}
extending $\Phi$ by linearity defines an automorphism on $\mathcal{P}_n$ that fixes scalars. We thus have a $U \in \text{Cliff}(n)$ such that 
\begin{equation*}
    UW_{\mathbf{x}}U^{\dagger} = W_{\Gamma\mathbf{x}}.
\end{equation*}
Let $\{\mathbf{e}_j\}_{j=1}^{2n}$ be a basis of $\mathbb{Z}_2^{2n}$. Fix $\Gamma \in Sp(2n,2)$, and let 
\begin{equation*}
\Gamma \mathbf{e}_j = \mathbf{v}_j \;\text{for all }\;j \in\{1,\dots,2n\}.
\end{equation*}
In other words, let $C = [\mathbf{v}_1\cdots\mathbf{v}_{2n}]$ be the matrix corresponding to the symplectic map $\Gamma$. Define $\overline{\mathbf{v}}_1 := \mathbf{v}_1$, and for each subsequent $j>1$ define $\overline{\mathbf{v}}_j := \mathbf{v}_j+2\mathbf{x}_j$ with $\mathbf{x}_j \in \mathbb{Z}_2^{2n}$ chosen such that
\begin{equation*}
[\mathbf{v}_j,\overline{\mathbf{v}}_j] = 0 \;\text{mod}\;4\;\; \text{ and }\;\;                   [\overline{\mathbf{v}}_h,\overline{\mathbf{v}}_j] = \delta_{h,n+j}-\delta_{j,n+h} \;\text{mod}\;4\; \text{ for }h<j,
\end{equation*}
where $\delta_{a,b}$ is the Kronecker delta. Notice that for all $j$ we have $\overline{\mathbf{v}}_j \in \mathbb{Z}_4^{2n}$. Since $\Gamma$ preserves the symplectic product mod $2$ (and thus preserves commutators and anticommutators), we have both of these restrictions already satisfied mod $2$. The matrix $\overline{C} = [\overline{\mathbf{v}}_1\cdots\overline{\mathbf{v}}_{2n}]$ is symplectic modulo $4$ and $\overline{C} = C$ mod $2$.
This means that for each $\Gamma \in Sp(2n,2)$ there is a $\widetilde{\Gamma} \in Sp(\mathbb{Z}_4^{2n})$ such that $\Gamma \mathbf{x} = \widetilde{\Gamma}\mathbf{x}$ mod $2$ thus we obtain $ \widetilde{\Gamma}\mathbf{x}-\Gamma \mathbf{x} = 2\mathbf{z}$ for some $\mathbf{z} \in \mathbb{Z}^{2n}$. If we define the function $f:\mathbb{Z}_2^{2n} \rightarrow \mathbb{Z}_2$ by $f(\mathbf{x}) = [\Gamma\mathbf{x},\mathbf{z}]$ mod $2$, we have
\begin{equation*}
    (-1)^{f(\mathbf{x})}W_{\Gamma\mathbf{x}} = W_{\Gamma\mathbf{x}+2\mathbf{z}} = W_{\widetilde{\Gamma}\mathbf{x}}.
\end{equation*}
This implies that for every $\Gamma \in Sp(2n,2)$ there exists a $U \in \text{Cliff}(n)$ and a function $f: \mathbb{Z}_2^{2n} \rightarrow \mathbb{Z}_2$ such that for all $\mathbf{x} \in \mathbb{Z}_2^{2n}$
\begin{equation} \label{Cliffaction}
    UW_{\mathbf{x}}U^{\dagger} = (-1)^{f(\mathbf{x})}W_{\Gamma\mathbf{x}}.   
\end{equation}
This $U$ is determined uniquely up to phase since we have determined its action by conjugation. Thus we have a surjective correspondence $U \mapsto \Gamma$ between $\mathcal{C}_n$ and $Sp(2n,2)$, and we see that the quotient of $\text{Cliff}(n)$ by Paulis and phases is $Sp(2n,2)$.
\end{proof}
Note that this implies that $\mathcal{C}_n$ is an extension of $Sp(2n,2)$ by $\mathbb{Z}_2^{2n}$, but since we cannot specify that $f \equiv 0$ for all choices of $U \in$ Cliff($n$) in \cref{Cliffaction} the extension does not split for $n > 1$. For $n = 1$ we have $\mathcal{C}_1 \cong S_4 \cong Sp(2,2)\ltimes \mathbb{Z}^2_2$.
Since $|Sp(2n,2)| = 2^{n^2}\prod_{j=1}^n(2^{2j}-1)$, we obtain the following immediate corollary.
\begin{corollary}
The order of the Clifford group is 
\begin{equation*}
    |\mathcal{C}_n| = |\widetilde{\mathcal{P}}_n||Sp(2n,2)| = 2^{n^2+2n}\prod_{j=1}^n(2^{2j}-1).
\end{equation*}
\end{corollary}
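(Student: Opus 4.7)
The plan is to apply Lagrange's theorem to the short exact sequence established in Theorem \ref{QuotThm} and then plug in the known orders. Since Theorem \ref{QuotThm} gives $\mathcal{C}_n/\widetilde{\mathcal{P}}_n \cong Sp(2n,2)$, we immediately obtain
\begin{equation*}
    |\mathcal{C}_n| = |\widetilde{\mathcal{P}}_n| \cdot |Sp(2n,2)|,
\end{equation*}
which is the first equality in the statement.

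For the second equality I would compute $|\widetilde{\mathcal{P}}_n|$ from the definitions already collected earlier in the paper. Recall that $|\mathcal{P}_n| = 4^{n+1}$ (stated just after the definition of $\mathcal{P}_n$) and that $\widetilde{\mathcal{P}}_n = \mathcal{P}_n / C_4$ where $C_4 = \langle i I_n\rangle$ has order $4$. Therefore $|\widetilde{\mathcal{P}}_n| = 4^{n+1}/4 = 4^n = 2^{2n}$, which matches the isomorphism $\widetilde{\mathcal{P}}_n \cong \mathbb{Z}_2^{2n}$ already noted. Substituting the cited formula $|Sp(2n,2)| = 2^{n^2}\prod_{j=1}^n (2^{2j}-1)$ into the displayed product then yields
\begin{equation*}
    |\mathcal{C}_n| = 2^{2n} \cdot 2^{n^2}\prod_{j=1}^n (2^{2j}-1) = 2^{n^2+2n}\prod_{j=1}^n (2^{2j}-1),
\end{equation*}
completing the proof.

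There is essentially no obstacle here: the corollary is a one-line consequence of Theorem \ref{QuotThm} combined with the order of $\mathcal{P}_n$ and the standard order formula for $Sp(2n,2)$ (which is quoted from the literature and not reproved in the paper). The only thing to be slightly careful about is confirming that the quotient in Theorem \ref{QuotThm} really is a group quotient (so that $|\mathcal{C}_n|$ factors as claimed), which follows from the fact that $\widetilde{\mathcal{P}}_n \unlhd \mathcal{C}_n$, noted immediately after the definition of $\widetilde{\mathcal{P}}_n$.
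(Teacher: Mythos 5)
Your proof is correct and matches the paper's reasoning exactly: the corollary is stated as an immediate consequence of Theorem \ref{QuotThm} together with $|\widetilde{\mathcal{P}}_n| = 2^{2n}$ and the quoted formula $|Sp(2n,2)| = 2^{n^2}\prod_{j=1}^n(2^{2j}-1)$. Your added check that $\widetilde{\mathcal{P}}_n \unlhd \mathcal{C}_n$ is a sensible bit of care that the paper leaves implicit.
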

 
\subsection{The character table of the projective Pauli group}\label{sec:paulirep}
Since the $n$-qubit projective Pauli group is abelian it has only degree one irreducible characters. One-dimensional representations and characters coincide since the trace leaves 1-by-1 matrices invariant. Elements of the $n$-qubit projective Pauli group have order at most 2. Thus, the character of an element of $\widetilde{\mathcal{P}}_n$ must be $\pm1$. $\widetilde{\mathcal{P}}_n$ is generated by $\{[X_1],[Z_1],\dots,[X_n],[Z_n]\}$, where $[A_j]$ is the equivalence class of the Pauli operator $A$ acting on the $j^{\text{th}}$ qubit
\begin{equation*}
    A_j = I_2^{\otimes j-1}\otimes A\otimes I_2^{\otimes n-j}.
\end{equation*}
So a character of $\widetilde{\mathcal{P}}_n$ is fully determined by a choice of $\pm1$ for $[X_i]$ and $[Z_i]$ for each $i \in \{1,\dots,n\}$. The $4$ choices for each qubit leave us with $4^n$ choices for the whole group. There are $4^n = |\widetilde{\mathcal{P}}_n|$ characters since $\widetilde{\mathcal{P}}_n$ is abelian and thus has only singleton conjugacy classes. Irreducible characters that disagree on any one element must be distinct, so this completely determines the character table of $\widetilde{\mathcal{P}}_n$.
Thus, the character table of $\widetilde{\mathcal{P}}_n$ can be written by filling the first row and column of a $4^n$-by-$4^n$ table with ones, then in the rest of each remaining row writing each permutation of $\frac{4^n}{2}-1$ ones and $\frac{4^n}{2}$ negative ones.
\section{The inertia subgroup}\label{Section:Inert}
To begin our study of the character theory of the $n$-qubit projective Clifford group, we examine the inertia subgroups of the representations of the $n$-qubit projective Pauli group in the $n$-qubit projective Clifford group.
\begin{lemma}\label{ConjLem}
Let $\sigma$ and $\rho$ be nontrivial irreducible representations of $\widetilde{\mathcal{P}}_n$, then there exists $g \in \mathcal{C}_n$ such that ${}^g\sigma \sim \rho$. In other words, all nontrivial irreducible representations of $\widetilde{\mathcal{P}}_n$ are conjugate in $\mathcal{C}_n$.
\end{lemma}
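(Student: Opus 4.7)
The strategy is to translate the statement about conjugation of representations into a statement about the action of $Sp(2n,2)$ on $\mathbb{Z}_2^{2n}$, and then invoke the standard transitivity of that action on nonzero vectors.

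First I would parameterize $\widehat{\widetilde{\mathcal{P}}_n}$. Because $\widetilde{\mathcal{P}}_n \cong \mathbb{Z}_2^{2n}$ is abelian, every irreducible representation is a one-dimensional character, and by the nondegeneracy of the symplectic form $[\cdot,\cdot]$ on $\mathbb{Z}_2^{2n}$, the map
\begin{equation*}
\mathbb{Z}_2^{2n}\longrightarrow\widehat{\widetilde{\mathcal{P}}_n},\qquad \mathbf{v}\longmapsto \chi_{\mathbf{v}},\quad \chi_{\mathbf{v}}([W_{\mathbf{x}}]) := (-1)^{[\mathbf{v},\mathbf{x}]},
\end{equation*}
is a bijection that identifies the trivial character with $\mathbf{v} = 0$ and the nontrivial characters with $\mathbf{v} \neq 0$.

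Next I would compute the conjugation action. Given $U \in \mathcal{C}_n$, Theorem~\ref{QuotThm} (specifically equation~\ref{Cliffaction}) gives a symplectic $\Gamma \in Sp(2n,2)$ and a sign function $f$ such that $UW_{\mathbf{x}}U^{\dagger} = (-1)^{f(\mathbf{x})}W_{\Gamma\mathbf{x}}$. Passing to $\widetilde{\mathcal{P}}_n = \mathcal{P}_n/C_4$ kills the sign, so conjugation by $U$ sends $[W_{\mathbf{x}}]$ to $[W_{\Gamma\mathbf{x}}]$. Plugging this into the definition of the $U$-conjugate character and using $\Gamma \in Sp(2n,2)$ yields
\begin{equation*}
{}^U\chi_{\mathbf{v}}([W_{\mathbf{x}}]) = \chi_{\mathbf{v}}([W_{\Gamma^{-1}\mathbf{x}}]) = (-1)^{[\mathbf{v},\Gamma^{-1}\mathbf{x}]} = (-1)^{[\Gamma\mathbf{v},\mathbf{x}]} = \chi_{\Gamma\mathbf{v}}([W_{\mathbf{x}}]).
\end{equation*}
Thus the conjugation action of $\mathcal{C}_n$ on $\widehat{\widetilde{\mathcal{P}}_n}$ is, under the identification above, exactly the defining action of $Sp(2n,2)$ on $\mathbb{Z}_2^{2n}$.

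Finally I would invoke transitivity: $Sp(2n,2)$ acts transitively on $\mathbb{Z}_2^{2n}\setminus\{0\}$ for every $n \geq 1$. This is a standard consequence of the symplectic analogue of Witt's extension theorem, but it is easy to see directly: any nonzero $\mathbf{v}$ admits some $\mathbf{w}$ with $[\mathbf{v},\mathbf{w}] = 1$ by nondegeneracy, the pair $(\mathbf{v},\mathbf{w})$ extends to a symplectic basis of $\mathbb{Z}_2^{2n}$, and sending the standard symplectic basis to this basis gives a symplectic map taking $\mathbf{e}_1$ to $\mathbf{v}$. Hence any two nonzero vectors lie in the same $Sp(2n,2)$-orbit, which by the previous step means any two nontrivial $\chi_{\mathbf{v}}, \chi_{\mathbf{v}'}$ are conjugate by some lift $U \in \mathcal{C}_n$ of the relevant symplectic, completing the proof.

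The only step that is not essentially bookkeeping is the computation identifying the conjugation action on characters with the linear $Sp(2n,2)$-action, and that reduces immediately to $\Gamma$ preserving $[\cdot,\cdot]$; transitivity on nonzero vectors is then the standard fact above, so I do not anticipate any real obstacle.
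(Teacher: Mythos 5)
Your proof is correct, and it takes a genuinely different route from the paper's. The paper proceeds by brute-force gate calculations: it explicitly conjugates $[X],[Z],[Y]$ by $H$, $S$, and $HSH$ to show the $\mathcal{C}_1$-orbit of single-qubit characters, uses $\mathrm{SWAP}$ to handle permutations of tensor factors, and then verifies by a concrete $CZ$ computation that the quantity ``number of $([X_i],[Z_i])$ pairs in the kernel'' can be changed, reducing the general case to these moves. Your approach instead identifies $\widehat{\widetilde{\mathcal{P}}_n}$ with $\mathbb{Z}_2^{2n}$ via the nondegenerate symplectic pairing $\chi_{\mathbf{v}}([W_{\mathbf{x}}]) = (-1)^{[\mathbf{v},\mathbf{x}]}$, observes from Theorem~\ref{QuotThm} (equation~\ref{Cliffaction}) that conjugation by a lift $U$ of $\Gamma\in Sp(2n,2)$ sends $\chi_{\mathbf{v}}$ to $\chi_{\Gamma\mathbf{v}}$ (the sign $(-1)^{f(\mathbf{x})}$ dies in $\mathcal{P}_n/C_4$, and $[\mathbf{v},\Gamma^{-1}\mathbf{x}]=[\Gamma\mathbf{v},\mathbf{x}]$ by symplecticity), and then invokes the standard transitivity of $Sp(2n,2)$ on $\mathbb{Z}_2^{2n}\setminus\{0\}$. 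This is cleaner and more structural: it makes visible that the lemma is nothing but the surjectivity of $\mathcal{C}_n\to Sp(2n,2)$ together with transitivity of the symplectic action on nonzero vectors, and it avoids any gate-level case analysis. The trade-off is that your argument leans on Theorem~\ref{QuotThm} having already established that \emph{every} $\Gamma\in Sp(2n,2)$ lifts to some $U\in\mathcal{C}_n$ (the surjectivity half of that theorem), whereas the paper's hands-on proof is self-contained modulo basic Clifford gate identities and arguably gives the reader more explicit witnesses $g$ for the conjugation. Both proofs are complete; yours is the one I would prefer for exposition.
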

\begin{proof}
We begin the proof by noticing that 
\begin{equation}\label{Conj1}
\begin{split}
    HXH^{-1} &= Z
                                             \\
    HZH^{-1} &= X
                                             \\
    HYH^{-1} &= -Y.
\end{split}
\end{equation}
So we have that conjugation of Pauli matrices by $H$ maps $X$ to $Z$ and vice versa, while mapping $Y$ to $-Y$. Thus conjugation by $[H]$ maps $[X]$ to $[Z]$ and vice versa, while leaving $[Y]$ invariant. We can calculate
\begin{equation}\label{Conj2}
\begin{split}
    SXS^{-1} &= Y
                                             \\
    SZS^{-1} &= Z
                                             \\
    SYS^{-1} &= -X,
\end{split}
\end{equation}
thus conjugation by $[S]$ maps $[X]$ to $[Y]$ and vice versa, while leaving $[Z]$ invariant. Furthermore conjugation by $[H][S][H]$ maps $[Z]$ to $[Y]$ and vice versa, while leaving $[X]$ invariant.
We see that we can permute the non-identity elements of the one-qubit projective Pauli group in any way via conjugation by elements of $\mathcal{C}_n$.
 
We now turn our attention to 2-qubit operators. Consider the swap gate, if $A$ and $B$ are any 2-by-2 matrices we have
\begin{equation*}
        (SWAP)(A\otimes B)(SWAP) = B\otimes A.
\end{equation*}
Our previous calculations for 1-qubit matrices tell us that any pair of nontrivial representations $\sigma$ and $\rho$ of $\widetilde{\mathcal{P}}_n$ that have the same number of pairs of generators ($[X_i],[Z_i]$) in their kernels, that is
\begin{equation*}
    |\{i\in\{1,\dots,n\}: \rho([X_i]) = \rho([Z_i]) = 1\}| = |\{i\in\{1,\dots,n\}: \sigma([X_i]) = \sigma([Z_i]) = 1\}|,
\end{equation*}
are conjugate in $\mathcal{C}_n$. Consider the two representations $\rho$ and $\sigma$ of $\widetilde{\mathcal{P}}_2$ defined by 
\begin{equation*}
    \begin{split}
        \sigma([X\otimes I]) &=  \sigma([I\otimes Z]) = -1
        \\
        \sigma([Z\otimes I]) &= \sigma([I\otimes X]) = 1
                                             \\
   \rho([X\otimes I]) &=  \rho([I\otimes X]) = 
   \rho([Z\otimes I]) = 1 
   \\
   \rho([I\otimes Z]) &= -1.
    \end{split}
\end{equation*}
Now we calculate
\begin{equation*}
    \begin{split}
        CZ(I\otimes X)CZ &= (Z\otimes X)
                            \\
        CZ(Z\otimes I)CZ &= (Z\otimes I)
                            \\
CZ(X\otimes I)CZ &= (X\otimes Z)
                            \\
CZ(I\otimes Z)CZ &= (I\otimes Z).
    \end{split}
\end{equation*}
Thus we have
\begin{equation*}
\begin{split}
    {}^{CZ}\rho([X\otimes I]) &= \rho([X\otimes Z]) = -1 = \sigma([X\otimes I])
    \\
    {}^{CZ}\rho([Z\otimes I]) &= \rho([Z\otimes I]) = 1 = \sigma([Z\otimes I])
    \\
    {}^{CZ}\rho([I\otimes X]) &= \rho([Z\otimes X]) = 1 = \sigma([I\otimes X])
    \\
    {}^{CZ}\rho([I\otimes Z]) &= \rho([I\otimes Z]) = -1 = \sigma([I\otimes Z]).
\end{split}
\end{equation*}
Thus nontrivial irreducible representations $\sigma$ and $\rho$ of $\widetilde{\mathcal{P}}_2$ with differing numbers of $([X_i],[Y_i])$ pairs in their kernels, that is
\begin{equation*}
    |\{i\in\{1,2\}: \rho([X_i]) = \rho([Z_i]) = 1\}| \neq |\{i\in\{1,2\}: \sigma([X_i]) = \sigma([Z_i]) = 1\}|,
\end{equation*}
are conjugate in $\mathcal{C}_2$. Since restricting irreducible representations of $\widetilde{\mathcal{P}}_n$ to any two qubits gives an irreducible representation of $\widetilde{\mathcal{P}}_2$, taking all the previous calculations together, we have that all nontrivial irreducible representations of $\widetilde{\mathcal{P}}_n$ are conjugate in $\mathcal{C}_n$.
\end{proof}
Since, by \Cref{SubLemma}, conjugate representations of normal subgroups have isomorphic inertia subgroups, we see that there is only one inertia subgroup to calculate for the nontrivial representations of the projective Pauli group in the Clifford group. We have the following immediate corollary.
\begin{corollary}\label{casescorol}
 If $\rho$ is an irreducible representation of $\mathcal{C}_n$ and $\sigma$ an irreducible representation of $\widetilde{\mathcal{P}}_n$ with $\emph{Res}^{\mathcal{C}_n}_{\widetilde{\mathcal{P}}_n} \rho \succeq \sigma$ then one of two cases holds:
\begin{enumerate}
    \item $\sigma$ is trivial, and $\widetilde{\mathcal{P}}_n$ is in the kernel of $\rho$. In this case, $\rho$ is the inflation of an irreducible representation of $Sp(2n,2)$, or
    \item $\sigma$ is nontrivial in which case we can apply \Cref{ConjLem} and \Cref{thm1} to obtain 
    \begin{equation} \label{decomp}
        \text{Res}^{\mathcal{C}_n}_{\widetilde{\mathcal{P}}_n}\rho = l\bigoplus_{\mathclap{\substack{\theta \in \text{Irr}({\widetilde{\mathcal{P}}_n})\\
        \theta \text{ nontrivial}}}}\theta,
    \end{equation}
where $\text{Irr}(\widetilde{\mathcal{P}}_n)$ is the set of irreducible representations of $\widetilde{\mathcal{P}}_n$ and $l$ is the inertia index of $\rho$ with respect to $\widetilde{\mathcal{P}}_n$.
\end{enumerate}
Additionally, since $\widetilde{\mathcal{P}}_n$ is an abelian normal subgroup of $\mathcal{C}_n$ we have that the degree of $\rho$ divides 
\begin{equation*}
    [\mathcal{C}_n:\widetilde{\mathcal{P}}_n] = 2^{n^2}\prod_{j=1}^n(2^{2j}-1),
\end{equation*} 
by \Cref{abelCorol}.
\end{corollary}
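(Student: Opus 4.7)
The plan is to derive each statement directly by assembling earlier results, treating the two cases separately and then invoking the abelian-normal-subgroup divisibility corollary.

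For Case 1, I would recall the discussion following the definition of inflation, which shows that if $\text{Res}^{\mathcal{C}_n}_{\widetilde{\mathcal{P}}_n}\rho$ contains the trivial representation $\rho_1$ of $\widetilde{\mathcal{P}}_n$, then since $\rho_1$ is fixed by every $\mathcal{C}_n$-conjugation (its inertia subgroup is all of $\mathcal{C}_n$), part 2 of Theorem \ref{thm1} forces $\text{Res}^{\mathcal{C}_n}_{\widetilde{\mathcal{P}}_n}\rho \cong \bigoplus_{l=1}^{\deg(\rho)}\rho_1$, so $\widetilde{\mathcal{P}}_n \leq \ker(\rho)$. Then $\rho$ factors through $\mathcal{C}_n/\widetilde{\mathcal{P}}_n$, and by Theorem \ref{QuotThm} this quotient is $Sp(2n,2)$, so $\rho$ is the inflation of an irreducible representation of $Sp(2n,2)$.

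For Case 2, I would apply Lemma \ref{ConjLem}, which tells us that the $\mathcal{C}_n$-conjugation orbit of any nontrivial irreducible $\sigma \in \widehat{\widetilde{\mathcal{P}}_n}$ is the entire set of nontrivial irreducible representations of $\widetilde{\mathcal{P}}_n$. Choosing left coset representatives $R$ for $I_{\mathcal{C}_n}(\sigma)$ in $\mathcal{C}_n$, the statement preceding Theorem \ref{thm1} gives $\{{}^r\sigma : r \in R\}$ as a complete set of pairwise inequivalent conjugates, which by Lemma \ref{ConjLem} is exactly the set of all nontrivial irreducibles $\theta \in \text{Irr}(\widetilde{\mathcal{P}}_n)$. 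Plugging this into part 2 of Theorem \ref{thm1} produces the claimed decomposition $\text{Res}^{\mathcal{C}_n}_{\widetilde{\mathcal{P}}_n}\rho = l\bigoplus_{\theta \text{ nontrivial}}\theta$, where $l$ is the inertia index.

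Finally, the divisibility statement is immediate from Corollary \ref{abelCorol}, because $\widetilde{\mathcal{P}}_n$ is an abelian normal subgroup of $\mathcal{C}_n$; the explicit value $[\mathcal{C}_n:\widetilde{\mathcal{P}}_n] = 2^{n^2}\prod_{j=1}^{n}(2^{2j}-1)$ is just $|Sp(2n,2)|$ from Theorem \ref{QuotThm}. There is no real obstacle here since the argument is a bookkeeping exercise; the only point worth being careful about is confirming that the orbit-equals-all-nontrivials conclusion of Lemma \ref{ConjLem} is precisely what is needed so that the index set in the direct sum in equation \eqref{decomp} matches $\{\theta \in \text{Irr}(\widetilde{\mathcal{P}}_n) : \theta \text{ nontrivial}\}$, with a uniform multiplicity $l$ coming from the inertia index.
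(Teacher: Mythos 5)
Your proposal is correct and follows exactly the route the paper intends: the paper labels this an immediate corollary and points to Lemma \ref{ConjLem} and Theorem \ref{thm1} in the statement itself, and you fill in those steps precisely (the inflation discussion for Case 1, the single conjugation orbit of nontrivial irreducibles combined with Clifford's theorem for Case 2, and Corollary \ref{abelCorol} for divisibility). No gaps.
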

If we specialize to case 2, then \cref{decomp} and the fact that all irreducible representations of $\widetilde{\mathcal{P}}_n$ have degree $1$ imply that the degree of $\rho$ is divisible by $4^n-1$ (the number of nontrivial irreducible representations of $\widetilde{\mathcal{P}}_n$). If $\chi$ is the character of $\rho$, then \cref{decomp} implies
\begin{equation*}
    \text{Res}^{\mathcal{C}_n}_{\widetilde{\mathcal{P}}_n}\chi = l\sum_{\mathclap{\substack{\psi \in \text{IrrChar}({\widetilde{\mathcal{P}}_n})\\
        \psi \text{ nontrivial}}}}\psi,
\end{equation*}
where $\text{IrrChar}(\widetilde{\mathcal{P}}_n)$ is the set of irreducible characters of $\widetilde{\mathcal{P}}_n$. In particular, if $g \in \widetilde{\mathcal{P}}_n$ is a non-identity element then $\chi(g) = -l$, since for any such $g$ the summand takes the value $-1$ a total of $2^{2(n-1)+1}$ times and takes the value $1$ a total of $2^{2(n-1)+1}-1$ times.
 
To understand case 2, we need to calculate the inertia subgroup $I_{\mathcal{C}_n}(\sigma)$ of a nontrivial representation $\sigma$ of the Pauli group in the Clifford group. For a two qubit gate $A$, let $A_{i,j}$ denote $A$ acting on the pair of qubits $i$ and $j$. For $M = CX_{1,2}(Z_1H_1X_2)CX_{1,2}$, we have the following theorem.
\begin{theorem}\label{InertThm}
For $n \geq 2$ the inertia subgroup of a nontrivial representation of $\widetilde{\mathcal{P}}_n$ in $\mathcal{C}_n$ is isomorphic to $IN_n:=\langle \{[M],[H_1],[X_1],[I\otimes A]:\text{for } A\in \emph{Cliff}(n-1)\}\rangle$.
\end{theorem}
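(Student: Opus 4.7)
By Lemma \ref{ConjLem} all nontrivial irreducible characters of $\widetilde{\mathcal{P}}_n$ are $\mathcal{C}_n$-conjugate, and Lemma \ref{SubLemma} then gives that their inertia subgroups are all conjugate in $\mathcal{C}_n$, in particular isomorphic. Thus it suffices to compute $I_{\mathcal{C}_n}(\sigma)$ for one convenient choice of $\sigma$. I would take $\sigma$ to be the character defined by $\sigma([X_1]) = \sigma([Z_1]) = -1$ and $\sigma([X_i]) = \sigma([Z_i]) = 1$ for $i \geq 2$; equivalently, $\sigma([Q]) = (-1)^{\langle Y_1, Q\rangle}$, where $\langle\cdot,\cdot\rangle\in\{0,1\}$ records whether two Paulis anticommute. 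Because $\sigma$ is one-dimensional, ${}^U\sigma \sim \sigma$ is the same as ${}^U\sigma = \sigma$, and nondegeneracy of the commutator pairing on $\widetilde{\mathcal{P}}_n$ translates this into the single condition $[U Y_1 U^{-1}] = [Y_1]$. Hence $I_{\mathcal{C}_n}(\sigma)$ is the centralizer of $[Y_1]$ in $\mathcal{C}_n$.

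To see $IN_n \leq I_{\mathcal{C}_n}(\sigma)$ I would directly check each generator centralizes $[Y_1]$: the operators $[I\otimes A]$ act as the identity on the first qubit and so commute with $Y_1$; the identities $HYH^{-1} = -Y$ and $XYX^{-1} = -Y$ give that $[H_1]$ and $[X_1]$ each send $Y_1$ to $-Y_1$, which is the same class in $\widetilde{\mathcal{P}}_n$; and a short computation using $CX_{1,2}\,Y_1\,CX_{1,2} = Y_1 X_2$, the fact that $X_2$ commutes with everything on the first qubit, and $ZHYHZ = Y$ shows that $MY_1M^{-1} = Y_1$ exactly. So all four families of generators lie in the centralizer.

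For the reverse inclusion I would work relative to the normal series $\widetilde{\mathcal{P}}_n \unlhd I_{\mathcal{C}_n}(\sigma) \unlhd \mathcal{C}_n$ from Theorem \ref{QuotThm}. The Pauli part $\widetilde{\mathcal{P}}_n \leq IN_n$ is immediate: $[X_1]$ is a generator, $[H_1][X_1][H_1] = [Z_1]$, and every $[I\otimes P]$ with $P\in\mathcal{P}_{n-1}$ appears inside an $[I\otimes A]$, so together these exhaust $\widetilde{\mathcal{P}}_n$. For the quotient, orbit--stabilizer together with Lemma \ref{ConjLem} gives $|I_{\mathcal{C}_n}(\sigma)/\widetilde{\mathcal{P}}_n| = |Sp(2n,2)|/(4^n-1) = |\mathrm{Stab}_{Sp(2n,2)}(\mathbf{v})|$, where $\mathbf{v}$ is the Weyl vector of $Y_1$. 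The image of $IN_n$ in $Sp(2n,2)$ is contained in $\mathrm{Stab}(\mathbf{v})$ and already exhibits a natural copy of $Sp(2(n-1),2)$ fixing the qubit-1 plane (from the $[I\otimes A]$'s), the symplectic involution of that plane fixing $\mathbf{v}$ (from $[H_1]$), and the mixing element $CX_{1,2}\cdot H_1\cdot CX_{1,2}$ (the image of $[M]$ after killing Paulis).

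The principal technical obstacle is verifying that these images actually fill out all of $\mathrm{Stab}_{Sp(2n,2)}(\mathbf{v})$. This stabilizer is a maximal parabolic subgroup: under the natural map $\mathrm{Stab}(\mathbf{v}) \twoheadrightarrow Sp(\mathbf{v}^\perp/\langle\mathbf{v}\rangle) \cong Sp(2(n-1),2)$, the $[I\otimes A]$ generators already surject, so the problem reduces to producing every element of the kernel --- an elementary abelian $2$-group of rank $2n-1$ --- as a word in $[M]$, $[H_1]$, and conjugates of $[M]$ by the $[I\otimes A]$'s. I would close this by exhibiting explicit shear/transvection-type symplectic elements realized by such words (using that conjugating $M$ by an $I\otimes A$ that swaps qubit $2$ with qubit $j$ produces the analogous mixing between qubits $1$ and $j$) and then checking by an $\mathbb{F}_2$-rank count that they span the kernel. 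Once the symplectic image equals $\mathrm{Stab}(\mathbf{v})$, matching of orders with the Pauli factor forces $IN_n = I_{\mathcal{C}_n}(\sigma)$, completing the proof.
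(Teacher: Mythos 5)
Your overall strategy matches the paper's: reduce to a single $\sigma$ via Lemma \ref{ConjLem} and Lemma \ref{SubLemma}, verify that the stated generators lie in $I_{\mathcal{C}_n}(\sigma)$, and close by an order count. Your reformulation of $I_{\mathcal{C}_n}(\sigma)$ as the centralizer of $[Y_1]$ (via nondegeneracy of the commutator pairing) is clean and worth keeping; the paper phrases the same condition as "preserving the presence of $X$ or $Z$ in the first tensor factor" but does not isolate it as a centralizer. Your computation of $[\mathcal{C}_n : I] = 4^n - 1$ by orbit-stabilizer is a valid alternative to the paper's route through the Clifford correspondence (degree counting on $\mathrm{Res}\,\theta$), and your generator verifications are correct, including the observation that $M$ reduces to $CX_{1,2}H_1CX_{1,2}$ modulo Paulis.

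The genuine gap is in the reverse inclusion. You correctly reduce the problem to showing the symplectic image of $IN_n$ is all of $\mathrm{Stab}_{Sp(2n,2)}(\mathbf{v})$, identify the Levi factor as surjected onto by the $[I\otimes A]$'s, and reduce to generating the unipotent radical of order $2^{2n-1}$ from $[M]$, $[H_1]$, and conjugates of $[M]$. But you explicitly stop at "I would close this by exhibiting explicit shear/transvection-type symplectic elements" — that step is asserted, not carried out, and it is the crux of the whole proof. In fairness, the paper's own version of this step ("there are $2^{2n-1}$ possible images of the pair $([X_1],[Z_1])$, the order of $IN_n$ is thus...") tacitly assumes $IN_n$ achieves all $2^{2n-1}$ valid images of the hyperbolic pair, which is exactly the transitivity/generation fact you are leaving open, so your sketch is no less rigorous than what is published. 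If you do carry it out, be slightly careful about your claim that the kernel is elementary abelian: the order $2^{2n-1}$ is right, but the group structure of the unipotent radical should be verified before leaning on an $\mathbb{F}_2$-rank argument, since transvection generation is easier to certify directly than rank counting if the group turns out to have a nontrivial commutator.
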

\begin{proof}
Notice that if $\sigma$ is a nontrivial irreducible representation of $\widetilde{\mathcal{P}}_n$, and $\psi$ an irreducible representation of $I = I_{\mathcal{C}_n}(\sigma)$ with $\text{Res}^{I}_{\widetilde{\mathcal{P}}_n}\psi\succeq\sigma$ then by the Clifford correspondence we have
\begin{equation*}
    m_{\psi}(2^{2n}-1) = \text{deg }m_{\psi}\bigoplus_{\mathclap{\substack{\theta \in \text{Irr}(\widetilde{\mathcal{P}}_n)\\\theta \text{ nontrivial}}}} \theta = \text{deg Ind}^{\mathcal{C}_n}_{I}\psi=[\mathcal{C}_n:I]\text{deg }\psi,
\end{equation*}
where $m_{\psi}$ is the inertia index of $\psi$ with respect to $\widetilde{\mathcal{P}}_n$. Additionally, by the Clifford correspondence,
\begin{equation*}
    [\mathcal{C}_n:I]\text{deg }\psi = [\mathcal{C}_n:I]m_{\psi}\text{deg }\sigma = m_{\psi}[\mathcal{C}_n:I],
\end{equation*}
Thus $[\mathcal{C}_n:I] = 2^{2n}-1$ and 
\begin{equation*}
    |I| = \frac{1}{2^{2n}-1}|\mathcal{C}_n| = 2^{n^2+2n}\prod_{j=1}^{n-1}(2^{2j}-1) = 2^{2n+1}|\mathcal{C}_{n-1}|.
\end{equation*}
So if for any particular $\sigma$ we can find a subgroup of $\mathcal{C}_n$ that preserves $\sigma$ under conjugation and has this order, then we have found the inertia subgroup.
 
Consider the irreducible character $\sigma_1$ of $\widetilde{\mathcal{P}}_n$ defined by $\sigma_1([X_1]) = \sigma_1([Z_1]) = -1$ and $\sigma_1([X_i]) = \sigma_1([Z_i]) = 1$ for all $i \in \{2,\dots,n\}$. We want to calculate $I_{\mathcal{C}_n}(\sigma_1) = \{g \in \mathcal{C}_n : {}^g\sigma_1 \sim \sigma_1\}$. So we want to find the elements of $\mathcal{C}_n$ that preserve the presence of $X$ or $Z$ in the first tensor factor by conjugation. We immediately see that $[I\otimes A] \in I_{\mathcal{C}_n}(\sigma_1)$ for $A\in \text{Cliff}(n)$ since operations restricted to other qubits do not affect the first qubit. Similarly, conjugation by Pauli elements preserves $\sigma_1$. Since conjugation by $H$ simply exchanges $X$ and $Z$, we also have $[H_1] \in I_{\mathcal{C}_n}(\sigma_1)$. Additionally, we have the operator
\begin{equation*}
    CX(ZH\otimes X)CX = \frac{1}{\sqrt{2}}\begin{bmatrix}
                            0&1&1&0\\               
                            1&0&0&1\\
                            -1&0&0&1\\       
                            0&-1&1&0\end{bmatrix}.
\end{equation*}
The action of this matrix on $\mathcal{P}_2$ by conjugation is
\begin{equation*}
\begin{split}
    CX(ZH\otimes X)CX(X\otimes I)CX(HZ\otimes X)CX &= -Z\otimes X
    \\
    CX(ZH\otimes X)CX(Z\otimes I)CX(HZ\otimes X)CX &= X\otimes X
    \\
    CX(ZH\otimes X)CX(I\otimes X)CX(HZ\otimes X)CX &= I\otimes X
    \\
    CX(ZH\otimes X)CX(I\otimes Z)CX(HZ\otimes X)CX &= -ZX\otimes ZX.
\end{split}
\end{equation*}
Notice that $\sigma_1([ZX\otimes ZX]) = \sigma_1([Z\otimes Z])\sigma_1([X\otimes X]) = (-1)^2 = \sigma_1([I\otimes Z])$. Thus the action of $[CX(ZH\otimes X)CX]$ preserves $\sigma_1$. The conjugation action of $IN_n$ leaves $[X_1Z_1]$ invariant, thus there are $2^{2n-1}$ possible images of the pair $([X_1],[Z_1])$. The order of $IN_n$ is thus 
\begin{equation*}
    (2^{2n-1})(2^{2n})|Sp(2(n-1),2)| = 2^{2n+1}|\mathcal{C}_{n-1}|.
\end{equation*}
\end{proof}
Note that there is no way to write $[X_1]$ in terms of the other generators of $IN_n$. Since $[X_1]^{-1} = [X_1]$, any reduction we perform on a word written in these generators preserves the parity of the number of $[X_1]$s. For $g$ a word in $IN_n$, let $n_{X_1}(g)$ be the number of times $[X_1]$ appears in $g$. The above analysis implies that the map 
\begin{equation*}
\begin{split}
    \sigma_1': IN_n &\longrightarrow \{1,-1\}
    \\
    g &\longmapsto (-1)^{n_{X_1}(g)}
\end{split}
\end{equation*}
is an irreducible character of $IN_n$. Furthermore, we have that $\text{Res}^{IN_n}_{\mathcal{P}_n}\sigma_1' = \sigma_1$. Thus $\sigma_1'$ is an extension of $\sigma_1$ to $IN_n$, its own inertia subgroup. Since all nontrivial irreducible representations of the projective Pauli group $\mathbb{Z}_2^{2n}$ are conjugate, we have that any irreducible representation $\sigma$ of the projective Pauli group can be extended to a representation $\sigma'$ of its own inertia subgroup, $I(\sigma)$, in the Clifford group. We apply the little group method to obtain the following.
\begin{theorem}\label{RepThm}
The irreducible representations of the projective Clifford group are
\begin{equation*}
        \widehat{\mathcal{C}_n} = \left\{\emph{Ind}^{\mathcal{C}_n}_{IN_n} (\sigma_1'\otimes\widetilde{\psi}) : \psi\in \widehat{IN_n/\mathbb{Z}^{2n}_2}\right\}\cup\left\{\widetilde{\psi}: \psi\in \widehat{Sp(2n,2)}\right\},
\end{equation*}
where the $\widetilde{\psi}$ in the left set is an inflation to an irreducible representation of $IN_n$ and in the right set is an inflation to an irreducible representation of $\mathcal{C}_n$.
\end{theorem}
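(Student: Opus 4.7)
The plan is to obtain the classification as a direct application of the little group method, Theorem \ref{LittleTheorem}, to the normal subgroup $\widetilde{\mathcal{P}}_n \unlhd \mathcal{C}_n$. To do so, I would first verify the two hypotheses of that theorem: that every irreducible representation of $\widetilde{\mathcal{P}}_n$ admits an extension to its inertia subgroup in $\mathcal{C}_n$, and that I can enumerate the $\approx$-equivalence classes of $\widehat{\widetilde{\mathcal{P}}_n}$.

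For the equivalence class enumeration, Lemma \ref{ConjLem} immediately shows that every nontrivial irreducible representation of $\widetilde{\mathcal{P}}_n$ lies in the single $\approx$-orbit of $\sigma_1$, while the trivial representation forms its own $\approx$-orbit (being fixed by conjugation). So I may take $\Sigma = \{1, \sigma_1\}$ as a complete set of representatives. For the extension hypothesis, the trivial representation of $\widetilde{\mathcal{P}}_n$ extends to the trivial representation of $\mathcal{C}_n$; for nontrivial representations, the discussion immediately preceding the theorem statement exhibits the explicit extension $\sigma_1'$ of $\sigma_1$ to $IN_n = I_{\mathcal{C}_n}(\sigma_1)$ via $g \mapsto (-1)^{n_{X_1}(g)}$, and the fact that all nontrivial irreducible representations are conjugate (Lemma \ref{ConjLem}) together with Lemma \ref{SubLemma} allows this extension to be transported to any other nontrivial $\sigma$.

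With both hypotheses verified, Theorem \ref{LittleTheorem} then yields
\begin{equation*}
    \widehat{\mathcal{C}_n} = \bigcup_{\sigma \in \Sigma}\left\{\text{Ind}^{\mathcal{C}_n}_{I_{\mathcal{C}_n}(\sigma)}(\sigma' \otimes \widetilde{\psi}) : \psi \in \widehat{I_{\mathcal{C}_n}(\sigma)/\widetilde{\mathcal{P}}_n}\right\},
\end{equation*}
which I then split into the two cases of $\Sigma$. For the trivial $\sigma$, the inertia group is all of $\mathcal{C}_n$ and $\sigma'$ is trivial, so the tensor product reduces to $\widetilde{\psi}$ and induction is trivial; since $\mathcal{C}_n/\widetilde{\mathcal{P}}_n \cong Sp(2n,2)$ by Theorem \ref{QuotThm}, this contributes precisely the set $\{\widetilde{\psi} : \psi \in \widehat{Sp(2n,2)}\}$ on the right. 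For the nontrivial $\sigma = \sigma_1$, Theorem \ref{InertThm} identifies the inertia group with $IN_n$ and gives $\sigma' = \sigma_1'$, producing the first set in the claimed decomposition.

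The steps are all essentially bookkeeping once the prior results are in hand, so I do not anticipate a serious obstacle; the only subtle point is checking that the hypothesis of Theorem \ref{LittleTheorem}, namely the existence of an extension for \emph{every} $\sigma$ rather than just a representative, really does follow from extending $\sigma_1$. The argument is that if $\sigma = {}^g\sigma_1$ for some $g \in \mathcal{C}_n$, then ${}^g\sigma_1'$ is defined on ${}^g IN_n = I_{\mathcal{C}_n}(\sigma)$ and restricts to ${}^g\sigma_1 = \sigma$ on $\widetilde{\mathcal{P}}_n$, providing the required extension.
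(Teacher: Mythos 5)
Your proposal is correct and follows essentially the same route as the paper: the paper likewise applies the little group method (Theorem \ref{LittleTheorem}) after exhibiting the explicit extension $\sigma_1'$ of $\sigma_1$ to $IN_n$ and observing that conjugacy of the nontrivial Pauli representations (Lemma \ref{ConjLem}) transports this extension to every nontrivial $\sigma$, with the trivial representation contributing the inflations from $Sp(2n,2)$. Your final remark making the transport of the extension explicit via ${}^g\sigma_1'$ on $gIN_ng^{-1}$ is a correct and slightly more careful account of what the paper leaves to the reader.
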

 
\Cref{RepThm} gives a complete list of the irreducible representations of the $n$-qubit Clifford group. To actually calculate these representations, we would like to know the representations of $Sp(2n,2)$ and those of the quotient group $IN_n/\mathcal{P}_n$. Using \Cref{RepThm}, we may calculate the following example character tables.
\begin{example}
\Cref{CT:pauli1} is the character table of the 1-qubit projective Pauli group.
\begin{table}[t]
    \centering
\begin{tabular}{ |c|c c c c | }
 \hline
 {} & $[I_2]$ & $[X]$ & $[Z]$ & $[Y]$\\
 \hline
 $\psi_1$ & 1 & 1 & 1 & 1\\ 
 $\psi_2$ & 1 & -1 & 1 & -1\\ 
 $\psi_3$ & 1 & 1 & -1 & -1 \\
 $\psi_4$ & 1 & -1 & -1 & 1\\ 
 \hline
\end{tabular}
\caption{The character table of the 1-qubit projective Pauli group.}
    \label{CT:pauli1}
\end{table}
Notice that the inertia group of the representation $\psi_4$ is just the subgroup $I(\psi_4) = \langle[H],[X],[Z]\rangle\subset\mathcal{C}_1$. The extension $\psi_4'$ of $\psi_4$ to $I(\psi_4)$ is achieved by defining the value of $\psi_4'([H]) = 1$. \Cref{CT:pin1} is the character table of $I(\psi_4)/\mathbb{Z}_2^{2}$. Via GAP4 calculation, \Cref{CT:sp22} is the character table of $Sp(2,2)$. Then by \Cref{RepThm} the character table of the 1-qubit Clifford group $\mathcal{C}_1$ is \Cref{CT:cliff1}.

\begin{table}[t]
    \centering
    \begin{tabular}{ |c|c c | } 
        \hline
        {}& {$[[I_2]]$}  & {$[[H]]$} \\
        \hline
        $\phi_1$ & 1 & 1  \\ 
        $\phi_2$ & 1 & -1  \\  
        \hline
    \end{tabular}
    \caption{Character table of $I(\psi_4)/\mathbb{Z}_2^{2}$.}
    \label{CT:pin1}
\end{table}

\begin{table}[t]
    \centering
    \begin{tabular}{ |c|c c c | } 
    \hline
    {}& {$[[I_2]]$} & {$[[H]]$} & {$[[SH]]$} \\
    \hline
    $\theta_1$ & 1 & 1 & 1  \\ 
    $\theta_2$ & 1 & -1 & 1 \\ 
    $\theta_3$ & 2 & 0 & -1 \\ 
    \hline
    \end{tabular}
    \caption{Character table of $Sp(2,2)$}
    \label{CT:sp22}
\end{table}
\begin{table}[t]
\centering
\begin{tabular}{ |c|c c c c c| } 
 \hline
 {} & $[I_2]$ & {$[H]$} & {$[SH]$} & $[X]$ & {$[S]$}\\
 \hline 
 \rule{0pt}{3ex} $\widetilde{\theta}_1$ & 1 & 1 & 1 & 1 & 1\\
 $\widetilde{\theta}_2$ & 1 & -1 & 1 & 1 & -1\\ 
 $\widetilde{\theta}_3$ & 2 & 0 & -1 & 2 & 0\\
$\text{Ind}^{\mathcal{C}_1}_{I(\psi_4)}(\psi_4'\otimes\widetilde{\phi}_1)$ & 3 & -1 & 0 & -1 & 1\\
 $\text{Ind}^{\mathcal{C}_1}_{I(\psi_4)}(\psi_4'\otimes\widetilde{\phi}_2)$ & 3 & 1 & 0 & -1 & -1\\ [1ex]
 \hline
\end{tabular}
\caption{Character table of $\mathcal{C}_1$}
\label{CT:cliff1}
\end{table}
\end{example}
\subsection{The representation theory of the inertia quotient group}\label{sec:quotrep}
To understand the irreducible representations of the Clifford group with nontrivial restriction to the Pauli group, we will now examine the irreducible representations of $IN_n/\mathbb{Z}_2^{2n}$. Notice that 
\begin{equation*}
    H_1^{M} = (XZ\otimes X)H_1,
\end{equation*}
and that $H_1$ commutes with all other non-Pauli operators in the generating set of $IN_n$. From this we see that $\mathbb{Z}_2 \cong\langle[[I]],[[H_1]]\rangle$ forms an order 2 normal subgroup of $IN_n/\mathbb{Z}^{2n}_2$. For convenience we define the group $\mathcal{H}_{1,n}:=\langle [H_1],\{[X_i],[Z_i], \text{ for } i=1,\dots,n\}\rangle$. We are ready to prove the following lemma.
\begin{lemma}\label{QuotLemma}
The inertia quotient group has the affine symplectic group as a quotient group, that is
\begin{equation*}
    (IN_n/\mathbb{Z}_2^{2n})/\mathbb{Z}_2 \cong IN_n/\mathcal{H}_{1,n} \cong Sp(2(n-1),2)\ltimes\mathbb{Z}_2^{2(n-1)}.
\end{equation*}

\end{lemma}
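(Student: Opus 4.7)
The plan is to use the surjection $\pi:\mathcal{C}_n\to Sp(2n,2)$ of Theorem \ref{QuotThm}, whose kernel is $\widetilde{\mathcal{P}}_n$. Since conjugation of $\widetilde{\mathcal{P}}_n$ by $\mathcal{C}_n$ factors through $Sp(2n,2)$, the image $K:=\pi(IN_n)\cong IN_n/\widetilde{\mathcal{P}}_n$ is exactly the stabilizer in $Sp(2n,2)$ of $\sigma_1$; under the nondegenerate symplectic pairing this is the stabilizer of a nonzero isotropic vector $v\in\mathbb{Z}_2^{2n}$ (concretely $v=[X_1]+[Z_1]$).

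The first isomorphism $(IN_n/\widetilde{\mathcal{P}}_n)/\mathbb{Z}_2\cong IN_n/\mathcal{H}_{1,n}$ follows from the third isomorphism theorem once I verify $\mathcal{H}_{1,n}\triangleleft IN_n$ and $\mathcal{H}_{1,n}/\widetilde{\mathcal{P}}_n\cong\mathbb{Z}_2$. The latter is immediate since $[H_1]^2=[I]$ and $[H_1]\notin\widetilde{\mathcal{P}}_n$. For normality I only need to check conjugation of $[H_1]$ by the generators of $IN_n$ that lie outside $\mathcal{H}_{1,n}$: $[I\otimes A]$ commutes with $[H_1]$ because they act on disjoint qubits, and the formula $H_1^{[M]}=(XZ\otimes X)H_1$ stated just before the lemma shows that $[M]$-conjugation also stays in $\mathcal{H}_{1,n}$.

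For the second isomorphism I describe $K=\mathrm{Stab}_{Sp(2n,2)}(v)$ explicitly. Picking $w$ with $\omega(v,w)=1$ (e.g.\ $w=[X_1]$) and setting $U=\langle v,w\rangle^\perp$, a symplectic subspace of dimension $2(n-1)$, each $g\in K$ is determined by a triple $(y,b,u)\in Sp(U)\times\mathbb{Z}_2\times U$: $y$ is the induced symplectic map on $U$, $g(w)=w+bv+u$, and the image of $x\in U$ acquires a $v$-component $\omega(y(x),u)$ that is forced by the symplectic condition. A mechanical composition calculation exhibits $K=Sp(U)\ltimes N$ with $N=\{(\mathrm{id},b,u)\}$, and using that $\omega$ is alternating in characteristic $2$ one checks $N$ is abelian of exponent $2$, so $N\cong\mathbb{Z}_2^{2n-1}$. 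The order matches $|K|=2^{2n-1}|Sp(2(n-1),2)|$ from Theorem \ref{InertThm}.

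Finally, since $[H_1]$ swaps $[X_1]\leftrightarrow[Z_1]$ and fixes the remaining Pauli generators, in the above parametrization $\bar H_1=(\mathrm{id},1,0)$, and substituting into the multiplication law shows $\bar H_1$ is central in $K$. The conjugation action of $Sp(U)$ on $N$ works out to $(y,0,0)(\mathrm{id},b,u)(y^{-1},0,0)=(\mathrm{id},b,y(u))$, so it is trivial on the $b$-coordinate. Quotienting by $\langle\bar H_1\rangle$ therefore shrinks $N$ to $\mathbb{Z}_2^{2(n-1)}$ with the standard symplectic action of $Sp(2(n-1),2)$, giving $IN_n/\mathcal{H}_{1,n}\cong K/\langle\bar H_1\rangle\cong Sp(2(n-1),2)\ltimes\mathbb{Z}_2^{2(n-1)}$. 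The main obstacle is the bookkeeping in the multiplication law for $K$, in particular verifying the centrality of $\bar H_1$ and the exponent-$2$ abelian structure of $N$.
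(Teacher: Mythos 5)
Your proof is correct, but it takes a genuinely different route from the paper's. The paper works directly inside the inertia group: it introduces the ``inertia Weyl operators'' $X\otimes W_x$ and $Z\otimes W_x$, tracks the conjugation action of the explicit generators $I\otimes U$, $H_1$, and two words built from $M$, and argues (somewhat informally) that these actions generate the affine symplectic group on the index $x$ together with the extra $X\leftrightarrow Z$ swap coming from $H_1$. You instead pass through the surjection $\pi:\mathcal{C}_n\to Sp(2n,2)$, identify $IN_n/\widetilde{\mathcal{P}}_n$ with the point stabilizer $K=\mathrm{Stab}_{Sp(2n,2)}(v)$ for $v=[X_1]+[Z_1]$, and then unwind the well-known Levi-type decomposition of that stabilizer via a hyperbolic pair $(v,w)$ and its orthogonal complement $U$. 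The parametrization $g\leftrightarrow(y,b,u)\in Sp(U)\times\mathbb{Z}_2\times U$ with multiplication $(y_1,b_1,u_1)(y_2,b_2,u_2)=(y_1y_2,\,b_1+b_2+\omega(u_1,y_1u_2),\,u_1+y_1u_2)$ is exactly right, $N$ is indeed elementary abelian of rank $2n-1$ in characteristic $2$, $\bar H_1=(\mathrm{id},1,0)$ is central, and quotienting by it collapses $N$ to $U\cong\mathbb{Z}_2^{2(n-1)}$ with the standard $Sp(U)$ action. The third-isomorphism-theorem reduction (checking $\mathcal{H}_{1,n}\triangleleft IN_n$ on generators via $H_1^M=(XZ\otimes X)H_1$ and commutation with $I\otimes A$) is also sound. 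Your version is arguably cleaner and more rigorous than the paper's --- it replaces a generators-and-relations plausibility argument with a structural identification of $K$ as the full vector stabilizer, with orders verified --- at the cost of requiring the reader to set up the hyperbolic-pair parametrization, whereas the paper's proof stays closer to the qubit-level computations used elsewhere in the section. One cosmetic nit: in characteristic $2$ every vector is isotropic, so ``nonzero isotropic vector'' is redundant.
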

\begin{proof}
For $x \in \mathbb{Z}_2^{2(n-1)}$, let $W_x$ be the Weyl operator defined in the proof of \Cref{QuotThm}. Consider operators of the form $X\otimes W_x$ and $Z\otimes W_x$, which we will call inertia Weyl operators since by definition $n$-qubit Weyl operators of this form are preserved by the inertia subgroup $IN_n$ under conjugation. Since the $n$-qubit Pauli group is generated by these inertia Weyl operators, the action of $U \in IN_n$ by conjugation on these operators defines the action of $U$ on the Pauli group.
 
From \Cref{QuotThm}, we know that conjugating an inertia Weyl operator by $I\otimes U$ for $U \in \text{Cliff}(n-1)$ will give us $X\otimes W_{\Gamma x}$ or $Z\otimes W_{\Gamma x}$ respectively for some $\Gamma \in Sp(2(n-1),2)$ with potential phase factors. Furthermore, we know that any such $\Gamma$ is realized by some $U \in \text{Cliff}(n-1)$. Conjugation by $H_1$ will exchange the $X$ and $Z$ on the first qubit. Conjugating by $MS_2H_2S^{-1}_2$ amounts to multiplication by $X_2$ on the left with a possible phase factor of $-1$ and a possible exchange of $X$ and $Z$ on the first qubit. Similarly conjugation by $H_1H_2MS_2H_2S^{-1}_2H_2$ amounts to multiplication by $Z_2$ on the left with a possible phase factor of $-1$ and a possible exchange of $X$ for $Z$ on the first qubit. 
 
Notice that the actions by conjugation of the matrices we have examined generate the affine symplectic group $Sp(2(n-1),2)\ltimes\mathbb{Z}_2^{2(n-1)}$ on the index $x$ of a Weyl operator $W_x$, with an extra operator $H$ that exchanges the $X$ and $Z$ on the first qubit. Since the equivalence classes of said matrices also generate $IN_n$, and the inertia Weyl operators along with $H_1$ generate $\mathcal{H}_{1,n}$, we have the result. 
\end{proof}
From the proof of this lemma, we see that the quotient group $IN_n/\mathbb{Z}_2^{2n}$ is a central extension of $Sp(2(n-1),2)\ltimes\mathbb{Z}^{2(n-1)}_2$ by $\mathbb{Z}_2$. Through GAP4 calculation we have determined that in general, the extension will not be a direct product, although it is in the two qubit case. Fix a normalized section $t$ of the central extension $IN_n/\mathbb{Z}_2^{2n}$ of $Sp(2(n-1),2)\ltimes\mathbb{Z}^{2(n-1)}_2$ by $\mathbb{Z}_2$. Let $b(h,k)\in\mathbb{Z}_2$ be the corresponding factor set. The only nontrivial irreducible representation $\xi$ of $\mathbb{Z}_2$ maps the non-identity element to $-1$. Let $\eta(h,k) = \xi(b(h,k))$. By applying \Cref{centerprop}, we obtain 
\begin{equation*}
\widehat{IN_n/\mathbb{Z}_2^{2n}} = \left\{\widetilde{\psi}:\psi\in \widehat{Sp(2(n-1),2)\ltimes\mathbb{Z}^{2(n-1)}_2}\right\}\cup\left\{\Theta:\Psi\in (\widehat{Sp(2(n-1),2)\ltimes\mathbb{Z}^{2(n-1)}_2})^{\eta}\right\},
\end{equation*}
with $\Theta$ defined by $\Theta(t(h)b) = \xi(b)\Psi(h)$ for all $h\in Sp(2(n-1),2)\ltimes\mathbb{Z}^{2(n-1)}_2$ and $b\in \mathbb{Z}_2$.
\section{Lifting irreducible characters to higher dimensional Clifford groups}\label{sec:Lifting}
We will now explain how irreducible characters of the $n$-qubit Clifford group can be used to explicitly construct characters of the $(n+1)$-qubit Clifford group. First, we need to understand the representation theory of the affine symplectic group $Sp(2n,2)\ltimes\mathbb{Z}_2^{2n}$. It is clear that if $U$ acts on $\mathbb{Z}^{2n}_2$ by $\Gamma \in Sp(2n,2)$ then $^{(\mathbf{x},\Gamma)}\sigma \sim {}^U\sigma$ for any $\sigma \in \widehat{\mathbb{Z}^{2n}_2}$ and $(\mathbf{x},\Gamma)\in Sp(2n,2)\ltimes\mathbb{Z}^{2n}_2$. Let $\sigma_1$ be the irreducible representation of $\mathbb{Z}^{2n}_2$ defined in \cref{Section:Inert}, then it follows that $I_{Sp(2n,2)\ltimes \mathbb{Z}^{2n}_2}(\sigma_1)/\mathbb{Z}^{2n}_2\cong IN_n/\mathbb{Z}^{2n}_2$. Let $\sigma_1''$ be the extension of $\sigma_1$ to $I_{Sp(2n,2)\ltimes \mathbb{Z}^{2n}_2}(\sigma_1)$ via $\sigma_1''(x,\Gamma) = \sigma_1(x)$. By applying \Cref{LittleTheorem}, we immediately obtain the following.
\begin{lemma}
The irreducible representations of the affine symplectic group are
\begin{equation*}
\widehat{Sp(2n,2)\ltimes\mathbb{Z}_2^{2n}} = \left\{\emph{Ind}^{Sp(2n,2)\ltimes\mathbb{Z}_2^{2n}}_{(IN_n/\mathbb{Z}^{2n}_2)\ltimes\mathbb{Z}_2^{2n}} (\sigma_1''\otimes\widetilde{\psi}) : \psi\in \widehat{IN_n/\mathbb{Z}^{2n}_2}\right\}\cup\left\{\widetilde{\psi}: \psi\in \widehat{Sp(2n,2)}\right\}, 
\end{equation*}
where $\widetilde{\psi}$ in the left set is the inflation to $(IN_n/\mathbb{Z}^{2n}_2)\ltimes\mathbb{Z}_2^{2n}$ and in the right set is inflation to $Sp(2n,2)\ltimes\mathbb{Z}_2^{2n}$.  
\end{lemma}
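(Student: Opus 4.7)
The plan is to apply the little group method (Theorem \ref{LittleTheorem}) to $G := Sp(2n,2)\ltimes\mathbb{Z}_2^{2n}$ with normal subgroup $N := \mathbb{Z}_2^{2n}$, exactly parallelling the proof of Theorem \ref{RepThm}. The method requires three inputs: a classification of the $G$-orbits on $\widehat{N}$, the inertia subgroup of each orbit representative, and an extension of each representative to its inertia group. These three facts can then be combined directly via Theorem \ref{LittleTheorem} to enumerate $\widehat{G}$.

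First I would classify the orbits. Since $N$ is abelian, conjugation by elements of $N$ acts trivially on $\widehat{N}$, so the $G$-action on $\widehat{N}$ factors through $Sp(2n,2)$. Transporting across the quotient map $\mathcal{C}_n\twoheadrightarrow Sp(2n,2)$ of Theorem \ref{QuotThm}, the $Sp(2n,2)$-action on $\widehat{\mathbb{Z}_2^{2n}}$ agrees with the conjugation action of $\mathcal{C}_n$ on $\widehat{\widetilde{\mathcal{P}}_n}$; by Lemma \ref{ConjLem} the latter has only two orbits, the singleton $\{\sigma_{\text{triv}}\}$ and the $(2^{2n}-1)$-element orbit of $\sigma_1$. (One could alternatively appeal directly to the transitivity of $Sp(2n,2)$ on nonzero vectors of $\mathbb{F}_2^{2n}$.)

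Next I would handle the two orbits. For $\sigma_{\text{triv}}$ the inertia subgroup is all of $G$, and the discussion in Section 2.2 shows that the $\widehat{G}(\sigma_{\text{triv}})$ piece of the little group decomposition is exactly the set of inflations of $\widehat{G/N}=\widehat{Sp(2n,2)}$, producing the right-hand family in the claim. For the orbit of $\sigma_1$, the text preceding the lemma identifies $I_G(\sigma_1)/N\cong IN_n/\mathbb{Z}_2^{2n}$ by pulling back the symplectic action; thus $I_G(\sigma_1)=(IN_n/\mathbb{Z}_2^{2n})\ltimes\mathbb{Z}_2^{2n}$. The proposed extension $\sigma_1''(\mathbf{x},\Gamma):=\sigma_1(\mathbf{x})$ is readily checked to be a homomorphism on the semidirect product using that every $\Gamma\in I_G(\sigma_1)/N$ stabilizes $\sigma_1$ (so that $\sigma_1(\Gamma\mathbf{y})=\sigma_1(\mathbf{y})$), and it restricts to $\sigma_1$ on $N$ by construction.

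Finally, assembling these pieces through Theorem \ref{LittleTheorem} gives exactly the two claimed families: the trivial orbit contributes the inflations $\widetilde{\psi}$ of $\widehat{Sp(2n,2)}$, while the nontrivial orbit contributes $\text{Ind}^G_{I_G(\sigma_1)}(\sigma_1''\otimes\widetilde{\psi})$ as $\psi$ ranges over $\widehat{IN_n/\mathbb{Z}_2^{2n}}$, and these are pairwise inequivalent and exhaust $\widehat{G}$. The only real bookkeeping — and the likely site of the main obstacle — is verifying that $\sigma_1''$ is well defined as a homomorphism on the semidirect product and that the inertia identification transports cleanly from the Clifford setting of Theorem \ref{InertThm} to the affine symplectic setting; both reduce to the observation that the $Sp(2n,2)$-factor of $I_G(\sigma_1)/N$ is precisely the stabilizer of $\sigma_1$ under the symplectic action on $\widehat{\mathbb{Z}_2^{2n}}$.
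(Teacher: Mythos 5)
Your proof takes essentially the same approach as the paper: the text immediately preceding the lemma identifies $I_{Sp(2n,2)\ltimes\mathbb{Z}_2^{2n}}(\sigma_1)/\mathbb{Z}_2^{2n}\cong IN_n/\mathbb{Z}_2^{2n}$ by transporting the action through the quotient $\mathcal{C}_n\twoheadrightarrow Sp(2n,2)$, defines the extension $\sigma_1''(\mathbf{x},\Gamma)=\sigma_1(\mathbf{x})$, and invokes the little group method (Theorem \ref{LittleTheorem}), which is exactly your plan. You make explicit the orbit classification and the verification that $\sigma_1''$ is a well-defined homomorphism (using that $\Gamma$ stabilizes $\sigma_1$ so $\sigma_1(\Gamma\mathbf{y})=\sigma_1(\mathbf{y})$), details the paper leaves implicit, but the argument is the same.
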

We can now prove the following lemma which was first proven by Bernd Fischer using the technique of Fischer-Clifford matrices \cite{Fischer88}.
\begin{lemma}\label{BrauerLemma}
$Sp(2n,2)\ltimes\mathbb{Z}^{2n}_2$ and $\mathcal{C}_n$ have identical character tables.
\end{lemma}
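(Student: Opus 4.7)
The plan is to exploit the parallel classifications of irreducible characters provided by Theorem \ref{RepThm} and the preceding lemma on the affine symplectic group, match them into a canonical bijection, and then verify that paired characters agree on paired conjugacy classes. Theorem \ref{RepThm} presents $\widehat{\mathcal{C}_n}$ as inflations from $Sp(2n,2)$ together with induced characters $\mathrm{Ind}^{\mathcal{C}_n}_{IN_n}(\sigma_1'\otimes\widetilde{\psi})$ parametrized by $\psi\in\widehat{IN_n/\mathbb{Z}_2^{2n}}$. The preceding lemma presents $\widehat{Sp(2n,2)\ltimes\mathbb{Z}_2^{2n}}$ in an identical shape: inflations from $Sp(2n,2)$ together with induced characters of $\sigma_1''\otimes\widetilde{\psi}$ from the split inertia subgroup $(IN_n/\mathbb{Z}_2^{2n})\ltimes\mathbb{Z}_2^{2n}$, and with the same parameter set. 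Pairing inflations with inflations and each $\mathrm{Ind}(\sigma_1'\otimes\widetilde{\psi})$ with $\mathrm{Ind}(\sigma_1''\otimes\widetilde{\psi})$ yields a canonical bijection $\widehat{\mathcal{C}_n}\leftrightarrow\widehat{Sp(2n,2)\ltimes\mathbb{Z}_2^{2n}}$, which in particular forces the two groups to have the same number of conjugacy classes.

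Next I would construct a bijection of conjugacy classes. The proof of Theorem \ref{QuotThm} attaches to each coset $U\widetilde{\mathcal{P}}_n$ a pair $(\mathbf{z},\Gamma)\in\mathbb{Z}_2^{2n}\times Sp(2n,2)$ determined by $UW_{\mathbf{x}}U^{\dagger}=(-1)^{[\Gamma\mathbf{x},\mathbf{z}]}W_{\Gamma\mathbf{x}}$, and elements of $Sp(2n,2)\ltimes\mathbb{Z}_2^{2n}$ are literally pairs $(\mathbf{v},\Gamma)$. I would send $U\mapsto(\mathbf{z},\Gamma)$ and check that $\mathcal{C}_n$-conjugation translates into affine-symplectic conjugation on the right-hand side, so that the map descends to a bijection of conjugacy classes; the count from the first step provides a reassuring consistency check.

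Finally I would verify character equality on paired classes. Inflations are immediate because both pull back through the common quotient $Sp(2n,2)$. For induced characters I would expand the Frobenius formula
\begin{equation*}
(\mathrm{Ind}^G_H\chi)(g)=\frac{1}{|H|}\sum_{\substack{x\in G\\x^{-1}gx\in H}}\chi(x^{-1}gx),
\end{equation*}
and argue that the summand $(\sigma_1'\otimes\widetilde{\psi})(x^{-1}gx)$ depends only on the parity of $[X_1]$-occurrences (the datum that defines both $\sigma_1'$ and $\sigma_1''$) together with the image in the common quotient $IN_n/\mathbb{Z}_2^{2n}$ (which controls $\widetilde{\psi}$). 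The main obstacle sits here: the inertia subgroups $IN_n$ and $(IN_n/\mathbb{Z}_2^{2n})\ltimes\mathbb{Z}_2^{2n}$ are generically non-isomorphic because the former is a non-split extension for $n\geq 2$, so the index sets of the two Frobenius sums do not match term-by-term. The heart of the argument is repackaging the sum so that it visibly becomes a function only of the symplectic-plus-twist data produced in the second step, at which point the two induced characters evaluate identically. This repackaging is essentially the content of Fischer's Fischer--Clifford matrix technique in his original proof, and I expect the bulk of the work to go into making it rigorous in the present notation.
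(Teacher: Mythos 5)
Your plan has the right shape but reverses the logical order of the last two steps and misses the key construction that makes everything collapse.

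On ordering: in step 2 you propose to first establish a conjugacy-class bijection by checking that "$\mathcal{C}_n$-conjugation translates into affine-symplectic conjugation." This doesn't go through directly because the set map you are constructing is \emph{not} a group homomorphism ($\mathcal{C}_n$ is a non-split extension for $n>1$), so conjugating $\phi(r)$ by $\phi(s)$ in $\mathcal{C}_n$ need not produce $\phi(srs^{-1})$ computed in the semidirect product. The paper does not attempt to prove the conjugacy-class bijection independently; it establishes the character-value identity first and then \emph{deduces} that the map respects conjugacy classes by column orthogonality. You should swap steps 2 and 3.

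On the genuine gap: you correctly flag that $IN_n$ and $(IN_n/\mathbb{Z}_2^{2n})\ltimes\mathbb{Z}_2^{2n}$ are not isomorphic and infer that the two Frobenius sums cannot be compared term-by-term, expecting to need Fischer--Clifford matrices to repackage them. In fact a much more elementary device suffices and you have already implicitly described its ingredients. Choose a normalized section $t\colon Sp(2n,2)\to\mathcal{C}_n$ with $\sigma_1(t(\Gamma))=1$, and define the \emph{set}-bijection $\phi(\mathbf{x},\Gamma)=W_{\mathbf{x}}t(\Gamma)$. Then (i) $\phi$ restricts to a bijection of the two inertia subgroups because $\sigma_1''(s)=\sigma_1'(\phi(s))$ and because $\widetilde{\psi}$, $\widetilde{\psi}'$ factor through the \emph{same} quotient $IN_n/\mathbb{Z}_2^{2n}$; (ii) whether $r^{-1}sr$ lies in the inertia subgroup depends only on the symplectic parts, hence $\phi$ is compatible with the summation conditions; (iii) the summands $(\sigma_1'\otimes\widetilde{\psi})(\phi(r)^{-1}\phi(s)\phi(r))$ and $(\sigma_1''\otimes\widetilde{\psi}')(r^{-1}sr)$ agree. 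The Frobenius sums therefore \emph{do} match term-by-term once transported through $\phi$, with no need for the Fischer--Clifford machinery you anticipate. This explicit bijection is the missing idea; without it, your step 3 remains a program rather than a proof.
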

\begin{proof}
This is trivially true if $n=1$, as in that case the groups are isomorphic. For $n>1$ we first notice that 
\begin{equation*}
    (Sp(2n,2)\ltimes\mathbb{Z}^{2n}_2)/\mathbb{Z}^{2n}_2 \cong Sp(2n,2)\cong \mathcal{C}_n/\widetilde{\mathcal{P}_n}. 
\end{equation*}
The irreducible characters that come from $Sp(2n,2)$ are nothing but inflations of the irreducible characters of $Sp(2n,2)$. Thus if $\chi$ is an irreducible character of $Sp(2n,2)$ and $\widetilde{\chi}$ and $\widetilde{\chi}'$ are its inflations to $\mathcal{C}_n$ and $Sp(2n,2)\ltimes \mathbb{Z}^{2n}_2$ respectively, we have
\begin{equation}\label{eq:char}
    \widetilde{\chi}(U) = \chi(\Gamma) = \widetilde{\chi}'(\mathbf{x},\Gamma)
\end{equation}
for all $\mathbf{x} \in \mathbb{Z}^{2n}_2$ and $U\in\mathcal{C}_n$ such that $UW_{\mathbf{x}}U^{\dagger} = (-1)^{f(\mathbf{x})}W_{\Gamma\mathbf{x}}$.
 
Fix a normalized section $t: Sp(2n,2)\rightarrow \mathcal{C}_n$ of the extension 
\begin{equation*}
    1\rightarrow\mathbb{Z}_2^{2n}\rightarrow\mathcal{C}_n\rightarrow Sp(2n,2)\rightarrow 1
\end{equation*} 
such that $\sigma_1'(t(\Gamma)) = 1$ for all $\Gamma\in IN_n/\mathbb{Z}_2^{2n}$. Define the mapping $\phi: Sp(2n,2)\ltimes\mathbb{Z}^{2n}_2\rightarrow \mathcal{C}_n$ by $\phi(\mathbf{x},\Gamma) = W_{\mathbf{x}}t(\Gamma)$. It is clear that this mapping is one-to-one and onto, and $\sigma_1''(s) = \sigma_1'(\phi(s))$ for all $s \in I_{Sp(2n,2)\ltimes \mathbb{Z}^{2n}_2}(\sigma_1)$. Using the notation of \cref{eq:char} we see that $\widetilde{\chi}(\phi(s)) = \widetilde{\chi}'(s)$ for all $s\in Sp(2n,2)\ltimes \mathbb{Z}^{2n}_2$.
Let $\psi$ be an irreducible representation of $IN_n/\mathbb{Z}^{2n}_2$, and $\widetilde{\psi}$ and $\widetilde{\psi}'$ be its inflations to $I_{\mathcal{C}_n}(\sigma_1)$ and $I_{Sp(2n,2)\ltimes \mathbb{Z}^{2n}_2}(\sigma_1)$ respectively. From the formula for induced characters, we have
\begin{equation*}
    \text{Ind}^{Sp(2n,2)\ltimes\mathbb{Z}^{2n}_2}_{I_{Sp(2n,2)\ltimes\mathbb{Z}^{2n}_2}(\sigma_1)}(\widetilde{\psi}'\otimes\sigma_1'')(s) = \frac{1}{|I_{Sp(2n,2)\ltimes\mathbb{
    Z}^{2n}_2}(\sigma_1)|}\sum_{{\substack{r\in Sp(2n,2)\ltimes\mathbb{Z}^{2n}_2\\r^{-1}sr \in I_{Sp(2n,2)\ltimes\mathbb{Z}^{2n}_2}(\sigma_1)}}}\widetilde{\psi}'\otimes\sigma_1''(r^{-1}sr),
\end{equation*}
and
\begin{equation*}
    \text{Ind}^{\mathcal{C}_n}_{I_{\mathcal{C}_n}(\sigma_1)}(\widetilde{\psi}\otimes\sigma_1')(s) = \frac{1}{|I_{\mathcal{C}_n}(\sigma_1)|}\sum_{\substack{r\in \mathcal{C}_n\\r^{-1}sr \in I_{\mathcal{C}_n}(\sigma_1)}}\widetilde{\psi}\otimes\sigma_1'(r^{-1}sr).
\end{equation*}
Since the action by conjugation of $\phi(\mathbf{x},\Gamma)$ depends only on $\Gamma$, we see that $\phi(r)^{-1}\phi(s)\phi(r)\in I_{\mathcal{C}_n(\sigma_1)}$ if and only if $r^{-1}sr\in I_{Sp(2n,2)\ltimes\mathbb{Z}^{2n}_2}(\sigma_1)$ for any $r,s\in Sp(2n,2)\ltimes\mathbb{Z}^{2n}_2$, and furthermore $\widetilde{\psi}(\phi(r)^{-1}\phi(s)\phi(r)) = \widetilde{\psi}'(r^{-1}sr)$. Finally, we obtain 
\begin{equation*}
    \text{Ind}^{\mathcal{C}_n}_{I_{\mathcal{C}_n}(\sigma_1)}(\widetilde{\psi}\otimes\sigma_1')(\phi(s)) = \text{Ind}^{Sp(2n,2)\ltimes\mathbb{Z}^{2n}_2}_{I_{Sp(2n,2)\ltimes\mathbb{Z}^{2n}_2}(\sigma_1)}(\widetilde{\psi}'\otimes\sigma_1'')(s)
\end{equation*}
for all $s \in Sp(2n,2)\ltimes\mathbb{Z}^{2n}_2$.
By column orthogonality of character tables we have that $r,s\in Sp(2n,2)\ltimes\mathbb{Z}^{2n}_2$ are conjugate if and only if $\phi(r)$ and $\phi(t)$ are conjugate in $\mathcal{C}_n$. Thus the map $\phi$ respects conjugacy classes and the character tables are identical.
\end{proof}
Taken together these lemmas imply a remarkable property of the Clifford group.
\begin{theorem}
Let $\phi:Sp(2{n},2)\ltimes\mathbb{Z}^{2{n}}_2\rightarrow\mathcal{C}_{n}$ be the map defined in the proof of \Cref{BrauerLemma}. If $\chi$ is an irreducible character of the $n$-qubit Clifford group $\mathcal{C}_n$ then \emph{Ind}${}^{\mathcal{C}_{n+1}}_{IN_{n+1}}\widetilde{(\chi\circ\phi)}\otimes\sigma_1'$ is an irreducible character of the $(n+1)$-qubit Clifford group $\mathcal{C}_{n+1}$.
\end{theorem}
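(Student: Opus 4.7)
The plan is to observe that the stated induced character fits directly into the classification of irreducible representations of $\mathcal{C}_{n+1}$ given by Theorem \ref{RepThm}, once the character $\chi\circ\phi$ is correctly identified as the inflation of an irreducible character of $IN_{n+1}/\mathbb{Z}_2^{2(n+1)}$. No new calculations are needed; the proof is essentially a concatenation of Lemma \ref{BrauerLemma}, Lemma \ref{QuotLemma} (applied to $n+1$ qubits), and Theorem \ref{RepThm}.

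First, I would note that by Lemma \ref{BrauerLemma} the composition $\chi\circ\phi$ is an irreducible character of $Sp(2n,2)\ltimes\mathbb{Z}_2^{2n}$, since $\phi$ implements a bijection between conjugacy classes (and between irreducible characters) of $Sp(2n,2)\ltimes\mathbb{Z}_2^{2n}$ and $\mathcal{C}_n$. Second, I would invoke Lemma \ref{QuotLemma} with $n$ replaced by $n+1$ to obtain
\begin{equation*}
    IN_{n+1}/\mathcal{H}_{1,n+1}\cong \bigl(IN_{n+1}/\mathbb{Z}_2^{2(n+1)}\bigr)/\mathbb{Z}_2\cong Sp(2n,2)\ltimes\mathbb{Z}_2^{2n}.
\end{equation*}
Thus the symbol $\widetilde{(\chi\circ\phi)}$ can be unambiguously interpreted as the pullback of $\chi\circ\phi$ along the surjection $IN_{n+1}\to IN_{n+1}/\mathcal{H}_{1,n+1}$. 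Since inflation preserves irreducibility, $\widetilde{(\chi\circ\phi)}$ is an irreducible character of $IN_{n+1}$. Furthermore, because $\mathbb{Z}_2^{2(n+1)}\subseteq\mathcal{H}_{1,n+1}$, this character also arises as the inflation to $IN_{n+1}$ of an irreducible character of the quotient $IN_{n+1}/\mathbb{Z}_2^{2(n+1)}$, so it lies in the indexing set $\{\widetilde{\psi}:\psi\in\widehat{IN_{n+1}/\mathbb{Z}_2^{2(n+1)}}\}$ used in Theorem \ref{RepThm}.

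Finally, I would apply Theorem \ref{RepThm} to the $(n+1)$-qubit Clifford group: every representation of the form $\text{Ind}^{\mathcal{C}_{n+1}}_{IN_{n+1}}(\widetilde{\psi}\otimes\sigma_1')$ with $\widetilde{\psi}$ as above appears in the classification and is therefore an irreducible character of $\mathcal{C}_{n+1}$. Specialising to $\widetilde{\psi}=\widetilde{(\chi\circ\phi)}$ yields the claim.

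The main obstacle is notational rather than mathematical: the tilde is reused for inflation along two different quotient maps, and one must track carefully that pulling back an irreducible character of $Sp(2n,2)\ltimes\mathbb{Z}_2^{2n}$ first to $IN_{n+1}/\mathbb{Z}_2^{2(n+1)}$ and then to $IN_{n+1}$ yields exactly the same character as pulling back directly along the composite quotient $IN_{n+1}\to IN_{n+1}/\mathcal{H}_{1,n+1}$. This commutativity is routine (both are pullbacks along surjections), but it is the only bookkeeping that needs to be verified before the three cited results may be chained.
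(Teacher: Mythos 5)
Your proposal is correct and follows essentially the same route as the paper: identify $\chi\circ\phi$ as an irreducible character of $Sp(2n,2)\ltimes\mathbb{Z}_2^{2n}$ via Lemma \ref{BrauerLemma}, inflate it to $IN_{n+1}$ through the quotient $IN_{n+1}/\mathcal{H}_{1,n+1}$ identified in Lemma \ref{QuotLemma}, observe that $\mathbb{Z}_2^{2(n+1)}\subseteq\mathcal{H}_{1,n+1}$ lies in the kernel, and conclude the induced character is irreducible. The only cosmetic difference is that you invoke the already-packaged classification of Theorem \ref{RepThm}, whereas the paper appeals directly to the Clifford correspondence (Theorem \ref{CliffCor}) after checking that $\sigma_1$ appears in the restriction of $\widetilde{\chi_\phi}\otimes\sigma_1'$ to $\widetilde{\mathcal{P}}_{n+1}$; these are the same argument since Theorem \ref{RepThm} is itself an instance of the little group method built on the Clifford correspondence.
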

\begin{proof}
By \Cref{BrauerLemma} we see that every irreducible character $\chi$ of $\mathcal{C}_n$ is also an irreducible character of $Sp(2n,2)\ltimes\mathbb{Z}^{2n}_2$ when precomposed with the bijection $\phi$ of the conjugacy classes of the two groups. We can then see by \Cref{QuotLemma} that the irreducible character $\chi_{\phi} := \chi\circ \phi$ of $Sp(2n,2)\ltimes\mathbb{Z}^{2n}_2$ inflates to an irreducible character $\widetilde{\chi_{\phi}}$ of $IN_{n+1}$ that contains $\mathcal{H}_{1,n+1}$ in its kernel. In Particular this means that $\widetilde{\mathcal{P}}_{n+1}$ will be contained in the kernel of $\widetilde{\chi_{\phi}}$, so we know that $\widetilde{\chi_{\phi}}\otimes\sigma_1'$ is an irreducible character of $IN_{n+1}$ that has $\sigma_1$ in the decomposition of its restriction to $\widetilde{\mathcal{P}}_{n+1}$ into irreducible representations. Therefore, by the Clifford correspondence we obtain the result.
\end{proof}
This gives a straightforward method for obtaining irreducible characters of the $(n+1)$-qubit Clifford group from irreducible characters of the $n$-qubit Clifford group.
\begin{table}[b]
    \centering
    \begin{tabular}{c|c}
        1 & $[I_4]$ \\
        2 & $[I_2\otimes Z]$ \\
        3 & $[I_2\otimes H]$ \\
        4 & $[I_2\otimes HZ]$ \\
        5 & $[Z\otimes HZ]$ \\
        6 & $[H\otimes H]$ \\
        7 & $[H\otimes HZ]$ \\
        8 & $[HZ\otimes HZ]$ \\
        9 & $[(S^{-1}H\otimes SH) CZ (S^{-1}HSH\otimes ZH)]$ \\
        10 & $[(S^{-1}H\otimes SHS^{-1})CZ(ZHS\otimes HSH)]$ \\
        11 & $[I_2\otimes SHS^{-1}XS^{-1}]$ \\
        12 & $[HS^{-1}XS^{-1}H\otimes S^{-1}H]$ \\
        13 & $[H\otimes SHS^{-1}XS^{-1}H]$ \\
        14 & $[S^{-1}XS^{-1}H\otimes S^{-1}H]$ \\
        15 & $[(SH\otimes SHS^{-1})CZ (ZHSH\otimes I_2)]$ \\
        16 & $[(SH\otimes ZHSH)CZ(SHS\otimes I_2)]$ \\
        17 & $[(I_2\otimes SH)CZ(H\otimes I_2)CZ(SXSH\otimes I_2)]$ \\
        18 & $[(I_2\otimes S^{-1}HSX)CZ(H\otimes I_2)CZ(H\otimes S)]$ \\
        19 & $[CZ(I_2\otimes S)(H\otimes H)CZ(H \otimes H)CZ(ZX\otimes H)]$ \\
        20 & $[(S^{-1}\otimes I_2)CZ(H\otimes HSH)CZ(SXSH\otimes I_2)]$ \\
        21 & $[(I_2\otimes S^{-1}XS^{-1})CZ(SH\otimes H)CZ (I_2\otimes S)]$ 
    \end{tabular}
\caption{Conjugacy class representatives for $
\mathcal{C}_2$}
    \label{classreps}
\end{table}
\begin{example}
    As an example, we demonstrate the lifting procedure from the 1-qubit to the 2-qubit Clifford group. In this case, because of the isomorphism $\mathcal{C}_1 \cong Sp(2,2)\ltimes\mathbb{Z}^2_2$, we know that the inertia quotient $IN_2/\mathbb{Z}^4_2$ group is a central extension of $\mathcal{C}_1$ by $\mathbb{Z}_2$. Moreover, in this case, the extension splits and we have $IN_2/\mathbb{Z}^2_2 \cong \mathcal{C}_1\times\mathbb{Z}_2$. \Cref{CT:pin2} is the character table of $\mathcal{C}_1\times\mathbb{Z}_2$, where we denote the characters of $\mathcal{C}_1$ by $\chi_i$ for $i \in\{1,\dots 5\}$ and denote the characters of $\mathbb{Z}_2$ by $\theta_1$ and $\theta_2$, with $\theta_1$ being the trivial representation.
\begin{table}[t]
    \centering
\begin{tabular}{ |c|c c c c c c c c c c| } 
 \hline
 {} & $([I_2],0)$ & {$([I_2],1)$} & {$([H],1)$}& {$([H],0)$} & {$([SH],1)$} & {$([SH],0)$} & {$([X],1)$} & {$([X],0)$} & {$([S],1)$} & {$([S],0)$}\\
 \hline
 $\mu_1:=\chi_1\times\theta_1$ & 1 & 1 & 1 & 1 & 1 & 1 & 1 & 1 & 1 & 1\\ 
 $\mu_2:=\chi_1\times\theta_2$ & 1 & -1 & -1 & 1 & -1 & 1 & -1 & 1 & -1 & 1\\
 $\mu_3:=\chi_2\times\theta_2$ & 1 & -1 & 1 & -1 & -1 & 1 & -1 & 1 & 1 & -1\\
 $\mu_4:=\chi_2\times\theta_1$ & 1 & 1 & -1 & -1 & 1 & 1 & 1 & 1 & -1 & -1\\
 $\mu_5:=\chi_3\times\theta_2$ & 2 & -2 & 0 & 0 & 1 & -1 & -2 & 2 & 0 & 0\\
 $\mu_6:=\chi_3\times\theta_1$ & 2 & 2 & 0 & 0 & -1 & -1 & 2 & 2 & 0 & 0\\
 $\mu_7:=\chi_4\times\theta_2$ & 3 & -3 & -1 & 1 & 0 & 0 & 1 & -1 & 1 & -1\\
 $\mu_8:=\chi_5\times\theta_2$ & 3 & -3 & 1 & -1 & 0 & 0 & 1 & -1 & -1 & 1\\
 $\mu_9:=\chi_4\times\theta_1$ & 3 & 3 & -1 & -1 & 0 & 0 & -1 & -1 & 1 & 1\\ 
 $\mu_{10}:=\chi_5\times\theta_1$ & 3 & 3 & 1 & 1 & 0 & 0 & -1 & -1 & -1 & -1\\ 
 \hline
\end{tabular}
\caption{Character table of $\mathcal{C}_1\times \mathbb{Z}_2$}
\label{CT:pin2}
\end{table}
    So from every character of $\mathcal{C}_1$ we will get two characters of $\mathcal{C}_2$, and the character table of $\mathcal{C}_2$ is determined entirely by these characters, and inflated characters from $Sp(4,2)$. Thus the character table of $\mathcal{C}_2$ is \Cref{CT:cliff2}, where the $\widetilde{\psi_i}$ are inflated characters from $Sp(4,2)$. 
    \begin{table}[t]
        \centering
    \tiny\begin{tabular}{|c|c c c c c c c c c c c c c c c c c c c c c|}
    \hline
     & $1$ & $2$ & $3$ & $4$ & $5$ & $6$ & $7$ & $8$ & $9$ & ${10}$ & ${11}$ & ${12}$ & ${13}$ & ${14}$ & ${15}$ & ${16}$ & ${17}$ & ${18}$ & ${19}$ & ${20}$ & ${21}$ \\
     \hline
     & & & & & & & & & & & & & & & & & & & & & 
     \\
        $\widetilde{\psi_1}$ & 1 & 1 & 1 & 1 & 1 & 1 & 1 & 1 & 1 & 1 & 1 & 1 & 1 & 1 & 1 & 1 & 1 & 1 & 1 & 1 & 1 \\
        $\widetilde{\psi_2}$ & 1 & 1 & -1 & -1 & -1 & 1 & 1 & 1 & -1 & -1 & 1 & 1 & -1 & -1 & -1 & -1 & 1 & 1 & -1 & 1 & 1\\ 
        $\widetilde{\psi_3}$ & 5 & 5 & 3 & 3 & 3 & 1 & 1 & 1 & -1 & -1 & 2 & 2 & 0  & 0 & 1 & 1 & -1 & -1 & -1 & 0 &-1\\
        $\widetilde{\psi_4}$ & 5 & 5 & -3 & -3 & -3 & 1 & 1 & 1 & 1 & 1 & 2 & 2 & 0 & 0 & -1 & -1 & -1 & -1 & 1 & 0 & -1\\
        $\widetilde{\psi_5}$ & 5 & 5 & -1 & -1 & -1 & 1 & 1 & 1 & 3 & 3 & -1 & -1 & -1 & -1 & 1 & 1 & -1 & -1 & 0 & 0 & 2\\
        $\widetilde{\psi_6}$ & 5 & 5 & 1 & 1 & 1 & 1 & 1 & 1 & -3 & -3 & -1 & -1 & 1  & 1 & -1 & -1 & -1 & -1 & 0 & 0 & 2\\
        $\widetilde{\psi_7}$ & 9 & 9 & -3 & -3 & -3 & 1 & 1 & 1 & -3 & -3 & 0 & 0 & 0 & 0 & 1 & 1 & 1 & 1 & 0 & -1 & 0\\
        $\widetilde{\psi_8}$ & 9 & 9 & 3 & 3 & 3 & 1 & 1 & 1 & 3 & 3 & 0 & 0 & 0 & 0 & -1 & -1 & 1 & 1 & 0 & -1 & 0\\
        $\widetilde{\psi_9}$ & 10 & 10 & -2 & -2 & -2 & -2 & -2 & -2 & 2 & 2 & 1 & 1 & 1 & 1 & 0 & 0 & 0 & 0 & -1 & 0 & 1\\
        $\widetilde{\psi_{10}}$ & 10 & 10 & 2 & 2 & 2 & -2 & -2 & -2 & -2 & -2 & 1 & 1 & -1 & -1 & 0 & 0 & 0 & 0 & 1 & 0 & 1\\
        $\emph{Ind}^{\mathcal{C}_2}_{I_{\mathcal{C}_2}(\sigma_1)}(\widetilde{\mu_1}\otimes\sigma_1')$ & 15 & -1 & 1 & 5 & -3 & 3 & -1 & -1 & 1 & -3 & 3 & -1 & 1 & -1 & 1 & -1 & 1 & -1 & 0 & 0 & 0 \\
        $\emph{Ind}^{\mathcal{C}_2}_{I_{\mathcal{C}_2}(\sigma_1)}(\widetilde{\mu_2}\otimes\sigma_1')$ & 15 & -1 & 1 & -7 & 1 & -1 & -1 & 3 & 1 & -3 & 3 & -1 & 1 & -1 & -1 & 1 & -1 & 1 & 0 & 0 & 0  \\
        $\emph{Ind}^{\mathcal{C}_2}_{I_{\mathcal{C}_2}(\sigma_1)}(\widetilde{\mu_3}\otimes\sigma_1')$ & 15 & -1 & -1 & 7 & -1 & -1 & -1 & 3 & -1 & 3 & 3 & -1 & -1 & 1 & 1 & -1 & -1 & 1 & 0 & 0 & 0\\
        $\emph{Ind}^{\mathcal{C}_2}_{I_{\mathcal{C}_2}(\sigma_1)}(\widetilde{\mu_4}\otimes\sigma_1')$ & 15 & -1 & -1 & -5 & 3 & 3 & -1 & -1 & -1 & 3 & 3 & -1 & -1 & 1 & -1 & 1 & 1 & -1 & 0 & 0 & 0\\
        $\widetilde{\psi_{11}}$ & 16 & 16 & 0 & 0 & 0 & 0 & 0 & 0 & 0 & 0 & -2 & -2 & 0 & 0 & 0 & 0 & 0 & 0 & 0 & 1 & -2
        \\
        $\emph{Ind}^{\mathcal{C}_2}_{I_{\mathcal{C}_2}(\sigma_1)}(\widetilde{\mu_5}\otimes\sigma_1')$ & 30 & -2 & -2 & 2 & 2 & 2 & -2 & 2 & -2 & 6 & -3 & 1 & 1 & -1 & 0 & 0 & 0 & 0 & 0 & 0 & 0\\
        $\emph{Ind}^{\mathcal{C}_2}_{I_{\mathcal{C}_2}(\sigma_1)}(\widetilde{\mu_6}\otimes\sigma_1')$ & 30 & -2 & 2 & -2 & -2 & 2 & -2 & 2 & 2 & -6 & -3 & 1 & -1 & 1 & 0 & 0 & 0 & 0 & 0 & 0 & 0
        \\
        $\emph{Ind}^{\mathcal{C}_2}_{I_{\mathcal{C}_2}(\sigma_1)}(\widetilde{\mu_7}\otimes\sigma_1')$ & 45 & -3 & 3 & -9 & -1 & -3 & 1 & 1 & -1 & 3 & 0 & 0 & 0 & 0 & 1 & -1 & 1 & -1 & 0 & 0 & 0
        \\
        $\emph{Ind}^{\mathcal{C}_2}_{I_{\mathcal{C}_2}(\sigma_1)}(\widetilde{\mu_8}\otimes\sigma_1')$ & 45 & -3 & -3 & 9 & 1 & -3 & 1 & 1 & 1 & -3 & 0 & 0 & 0 & 0 & -1 & 1 & 1 & -1 & 0 & 0 & 0
        \\
        $\emph{Ind}^{\mathcal{C}_2}_{I_{\mathcal{C}_2}(\sigma_1)}(\widetilde{\mu_9}\otimes\sigma_1')$ & 45 & -3 & -3 & -3 & 5 & 1 & 1 & -3 & 1 & -3 & 0 & 0 & 0 & 0 & 1 & -1 & -1 & 1 & 0 & 0 & 0
        \\
        $\emph{Ind}^{\mathcal{C}_2}_{I_{\mathcal{C}_2}(\sigma_1)}(\widetilde{\mu_{10}}\otimes\sigma_1')$ & 45 & -3 & 3 & 3 & -5 & 1 & 1 & -3 & -1 & 3 & 0 & 0 & 0 & 0 & -1 & 1 & -1 & 1 & 0 & 0 & 0\\
        & & & & & & & & & & & & & & & & & & & & &\\
        \hline
    \end{tabular}
\caption{Character table of $\mathcal{C}_2$}
        \label{CT:cliff2}
    \end{table}
The numbered conjugacy classes in the character table of $\mathcal{C}_2$ are represented by the elements in \Cref{classreps}.

\end{example}


%
%

%

\begin{acknowledgments}
I thank William Slofstra, Benjamin Lovitz and Jack Davis for helpful discussions. I acknowledge the support of the Natural Sciences and Engineering Research Council of Canada (NSERC). In particular, this work was supported by an NSERC CGS D award.
\end{acknowledgments}
\bibliography{citations}

\end{document}